\documentclass[11pt]{article}
\usepackage[a4paper,hmargin=3cm,vmargin=3cm]{geometry}

\usepackage{eqnarray,amssymb,amsmath,amsthm,mathrsfs,array,tikz, authblk}
\usepackage{tikz-cd}
\usepackage[applemac]{inputenc} %%% Pour les accents
\usepackage[T1]{fontenc} %%% pour la césure des mots accentués 
\usepackage{tipa,url}

\newcommand{\MM}{\mathcal{M}}

\newcommand{\DD}{\mathscr{D}}
\newcommand{\OO}{\mathcal{O}}

\newcommand{\valua}{\mbox{\textscriptv}}

\newcommand{\pcal}{\mathfrak{p}}
\newcommand{\FF}{{\mathcal F}}
\newcommand{\F}{{\mathbb F}}
\newcommand{\Z}{{\mathbb Z}}
\newcommand{\N}{{\mathbb N}}
\newcommand{\E}{{\mathscr E}}
\newcommand{\C}{{\mathscr C}}
\newcommand{\GF}[1]{\F_{#1}}
\newcommand{\GFbar}[1]{\overline{{\mathbb F}}_{#1}}
\newcommand{\PP}{{\mathbb P}}
\newcommand{\EE}{{\mathbb K}}
\newcommand{\LL}{{\mathbb L}}

\newcommand{\bracket}[2]{\left[{#1},{#2}\right]}
\newcommand{\Pol}{\mbox{Pol}}

\newcommand{\Fr}{\pi}
\newcommand{\Deg}{\mbox{Deg}}
\newcommand{\Princ}{\mbox{Princ}}
\newcommand{\Pic}{\mbox{Pic}}
\newcommand{\Div}{\mbox{Div}}

\newtheorem{defi}{Definition}
\newtheorem{lemma}{Lemma}
\newtheorem{remark}{Remark}
\newtheorem*{heuristic}{Heuristic}

\newtheorem{theorem}{Theorem}
\newtheorem{corollary}{Corollary}
\newtheorem{propo}{Proposition}

\DeclareMathAlphabet{\mathpzc}{OT1}{pzc}{m}{it}

\tikzset{ampersand replacement=\&}

%Couleurs remplissage claires
\definecolor{marronchene}{RGB}{217,205,180}
\definecolor{vertdragee}{RGB}{180,217,190}
\definecolor{bleulavande}{RGB}{183,180,217}
\definecolor{bleuciel}{RGB}{240,240,250}
\definecolor{rougegarance}{RGB}{217,180,180}
\definecolor{grisargent}{RGB}{210,210,220}
\definecolor{oeuf}{RGB}{225,215,210}

%Couleurs écriture foncées
\definecolor{vertforet}{RGB}{50,150,50}
\definecolor{bleumarin}{RGB}{50,50,150}
\definecolor{rougerubis}{RGB}{150,50,50}
\definecolor{grisouris}{RGB}{100,100,100}
\definecolor{bleucecile}{RGB}{50,100,110}
\definecolor{vertcecile}{RGB}{50,110,80}

%\date{}

\begin{document}

\newif\ifanonymous
\anonymousfalse 
%\anonymoustrue

\title{Algorithmic aspects of elliptic bases in finite field discrete logarithm algorithms}
\author[1,2]{Antoine Joux}
\author[3]{C\'ecile Pierrot}
 \affil[1]{\footnotesize Sorbonne Universit\'e, Institut de Math\'ematiques de
  Jussieu--Paris Rive Gauche, CNRS, INRIA, Univ Paris Diderot. Campus Pierre et Marie Curie, F-75005, Paris, France}
  \affil[2]{\footnotesize Chaire de Cryptologie de la Fondation SU}
  \affil[3]{\footnotesize  Université de Lorraine, CNRS, Inria, LORIA, F-54000 Nancy, France}
\date{}

\maketitle

\begin{abstract}
 Elliptic bases, introduced by Couveignes and Lercier in 2009, give an
 elegant way of representing finite field extensions. A natural
 question which seems to have been considered independently by several
 groups is to use this representation as a starting point for small
 characteristic finite field discrete logarithm algorithms.

 This idea has been recently proposed by two groups working on it, in
 order to achieve provable quasi-polynomial time for discrete
 logarithms in small characteristic finite fields.

 In this paper, we don't try to achieve a provable algorithm but,
 instead, investigate the practicality of heuristic algorithms based
 on elliptic bases. Our key idea, is to use a different model of the
 elliptic curve used for the elliptic basis that allows for a
 relatively simple adaptation of the techniques used with former
 Frobenius representation algorithms.
 
 We haven't performed any record computation with this new method but
 our experiments with the field $\F_{3^{1345}}$ indicate that
 switching to elliptic representations might be possible with
 performances comparable to the current best practical methods.
\end{abstract}

\section{Introduction}
The discrete logarithm problem (DLP) is a fundamental
problem underlying the security of many cryptographic systems. Given
$G$  a finite cyclic group denoted multiplicatively and $g$ a
generator of the group, solving the discrete logarithm problem 
in $G$ means being able, for any arbitrary element $h \in G$, to 
find an integer $x$ such that:
\[g^x = h.
\]
The integer $x$ is defined modulo the order of $G$ and is called the
discrete logarithm of $h$.

Among the groups considered for cryptographic use, we find the
multiplicative group of finite fields. There is a long history of
algorithms to address this problem that we do not recall here. In the
case of ``small'' characteristic fields, tremendous progress was made
in 2013 and the years after. They are surveyed
in~\cite{DBLP:journals/dcc/JouxP16}.  This led to extreme
computational improvements and to two flavors of heuristic
quasi-polynomial time algorithms.  One of the fundamental tools used
to achieved this result is a special representation of a finite field
extension above $\F_q$, called the Frobenius representation which
requires an element $\theta$ satisfying a relation of the form:
\[\theta^q = \frac{h_0(\theta)}{h_1(\theta)},
\]
where $h_0$ and $h_1$ are co-prime polynomials with very low degree.

A widely believed heuristic assumption is that any finite field
extension can be represented in that way, unless one of the known
obstructions  applies. These known obstructions are that $h_0$ and $h_1$
can both have degree $\leq 1$. Furthermore, it is clearly not possible
to represent an extension of degree higher than $q+\deg(h_1).$
In practice, finding such a representation via exhaustive search among
suitable polynomials is a trivial matter. However, proving this
assumption seems to be a difficult task.

As a consequence, it is natural to turn to different field
representations which can provably be constructed and try to adapt the
discrete logarithm algorithms to work with them.  Elliptic bases, also
called elliptic periods~\cite{DBLP:journals/ffa/CouveignesL09}, form a
natural candidate for this purpose. Their use for discrete logarithms
was independently considered by several groups. We are aware of two
attempts which have been made public. In 2016, in his master's
thesis~\cite{GuidoLido}, Lido proposed a discrete logarithm based on
elliptic representations, using a descent method made of two halves.
His presentation states a theorem concerning one half of the descent
and a conjecture for the other half.  On June 26th, 2019, Kleinjung
and Wesolowski released a preprint~\cite{KW19} on the eprint archive
announcing a fully provable quasi-polynomial time algorithm based on
elliptic representation.  The next day, Schoof gave a talk at the
conference NutMiC~2019 presenting the work of Lido. He also sent us a
not yet publicly available document~\cite{GuidoLido2} that
extends~\cite{GuidoLido} and contains a theorem announcing an
algorithm to compute logarithms in a finite field $\F$ in provable
time $(\log |\F|)^{O(\log\log |\F|)})$.

The result announced in~\cite{KW19}, has a different form. It states
that discrete logarithms in $\F_{p^n}$ can be computed in provable
time $(pn)^{2\log_2{n}+O(1)}.$ Both forms affect a very
large range of characteristic. Indeed, until now, the best provable
discrete logarithm algorithms for finite fields had complexity
$L(1/2)$. They are thus outperformed as soon as
$p<L_{p^n}(1/2-\epsilon)$, for an arbitrary small $\epsilon>0$.

In this paper, we present the work we independently performed on a similar
idea. However, we do not consider the provable aspects. Instead, we focus more
on the algorithmic aspects of heuristic variants of elliptic
representation discrete logarithm methods. Our formulation differs in
many details. As such, it might shed a different light on the topic
and help the reader to study the theoretical breakthrough on provable
algorithms. As of now, our proposal remains slightly inferior to the
method in~\cite{DBLP:conf/asiacrypt/JouxP14}, the fastest currently known (heuristic) method to compute discrete logarithms in small characteristic. However, it gets very
close to it, while leaving room for improvement in the analysis.

\paragraph{Quick overview of Frobenius representation algorithms.}
The common strategy of all heuristic algorithms in the Frobenius representation
family is the following.

\begin{itemize}
\item
A preliminary step, often called the  {\it representation phase}, we
construct a representation of an extension of degree $k$ of $\F_q$, assuming that:
\begin{heuristic} 
\label{heuristicFrobenius}
There exists a small divisor $n$ of $k$ (possibly one), two coprime
polynomials~$h_0$ and~$h_1$ in~$\F_{q^n}[X]$ of low degree (often of
degree at most two) and an irreducible factor~$I$ of~$h_1 X^q - h_0$
having degree~$k/n$.
\end{heuristic} This gives the representation of the target field
$\F_{q^k}$ as $\F_{q^n}[X]/(I)$. 
\item { \bf Relation generation.} Set a small set $\FF$ of particular elements, small in some sense, such that $\FF \subset \F_{q^k}$. Collect vectors $(e_f)_{f \in \FF}$ such that:
\[\prod_{f \in \FF} f^{e_f} = 1.
\]These vectors create linear relations between the discrete logarithm of factor base elements since:
\[\sum_{f \in \FF} {e_f}. \log_g f = 0.
\]
\item { \bf Linear algebra.} Find a solution of the above system. It gives the discrete logarithms of elements of $\FF$ (maybe not all). Note that the number of non zero coefficients $e_f$ per relation is very low, let's say a constant $\lambda$. Then it's possible to compute this phase with sparse linear algebra: if $|\FF|$ is an upper bound on the number of unknowns, then algorithms such as Wiedmann algorithm find solutions in $O(\lambda |\FF|^2)$ operations.
\item { \bf Individual discrete logarithm.} Also known as the descent phase, here the aim is to find a relation only involving the target and factor base elements:
\[h\prod_{f \in \FF} f^{e_f} = 1.
\] Since the discrete logarithms of $\FF$ are known, we can reconstruct $\log_g h$ by computing 
$\log_g h = - \sum_{f \in \FF} {e_f}. \log_g f $.
\end{itemize}
Because of potential dependencies, it's not easy to see when we have enough relations,
but in practice having a few more relations than factor base elements
is usually sufficient.

To turn this into a provable algorithm, it is not only necessary to prove
that the finite field representation can be constructed, it is also
required to change the relation generation and linear
algebra. Essentially, one follows the approach
of~\cite{Pomerance1987}, which consists in decomposing plenty of
elements of the form $g^ah^b$ over the factor base. Linear algebra can
then be used to eliminate the contribution of the factor base by
combining the equations which leads to a random identity of the form
$g^Ah^B=1$ from which the discrete logarithm can be recovered.
Unfortunately, this is much less practical. In
particular, there no longer is an individual logarithm phase in this
approach and the full computation has to be restarted from scratch for every
discrete logarithm computation in the same field.

Assuming that the field representation exists, this makes the
computation of discrete logarithms provable. For Frobenius
representation algorithms, this type of provable approach (assuming that the
field representation is given) has been studied in~\cite{GKZ2018,DBLP:journals/iacr/KleinjungW18,DBLP:journals/iacr/GologluJ18}.

\paragraph{Overview of the goal with elliptic representations.}
Let's $\F_{p^k}$ be the target finite field in which we want to
compute discrete logarithms. To simplify exposition, we assume  
$p \geq 5$, since the equation of the curves needs  to be chosen
differently in characteristic $2$ or~$3$. 
We summarize our construction as follows:
\begin{enumerate}
\item {\bf Representation.} Create an elliptic curve $\E$ over $\F_q$ ($q$ being a power of $p$) such that:
\[ \# (\E/\F_q) = \mu k,
\] with $\mu$ a natural integer.  If $k$ is square, there necessarily
exists a $k$-torsion point $P_1 \in \E$. Otherwise, for every prime
$\ell$ dividing $k$, let $\ell^{e_\ell}$ be the largest power of
$\ell$ dividing $k$.  For each $\ell$ with $e_\ell>1$, it might be necessary to change $\E$
to an isogenous curve by applying a sequence of $\ell$-isogenies in
order to guarantee that a point of order $\ell^{e_\ell}$ exists. These
changes to $\E$ suffice to guarantee the existence of $P_1$.

Finally, find a point $F \in \E$
such that:
 \[\pi(F) = F +P_1,
 \] where $\pi$ is the Frobenius action in the field $\F_q$. In
 particular, if we note $F = (\theta, \tau)$ then we can write
 $\F_{q^{k}}$ as $\F_{q}[\theta, \tau]$.
 
 This almost gives the desired representation of the target
 field. Indeed, $\F_{q}[\theta]$ is either $\F_{q^{k}}$ or $\F_{q^{k/2}}$.
In the sequel, we assume that $\F_{q}[\theta]=\F_{q^k}$, multiplying
$k$ by $2$ if necessary.

Then, since $\F_{p^k}$ is a subfield of both $\F_{q^k}$  or
$\F_{q^{2k}}$,  we see that computing discrete logarithms in
$\F_{q}[\theta]$ is sufficient to achieve the desired goal.

\item {\bf Commutative Diagram.} We now define the full representation
  we want to use from a curve $\C$ in 3 dimensions obtained as the
  image of the following rational map:
$$
\begin{array}{rcc}
\Phi : \E &\mapsto& \GFbar{q}^3\\
Q & \mapsto & (x_{Q-P_1}, x_Q, x_{Q+P_1})
\end{array}
$$
At first, this might seem to be a strange model of an elliptic curve.
However, the intuition is that, with this model, the image of $F$ is a
point with a really useful property: taking the Frobenius of one of
its coordinate leads to the following one. In other words, if $\Phi(F)$ is
seen as a point of $\C$ in the affine space $\F_q[U,V,W]$ then it lies
on the intersection of the surfaces defined by the two equations
$U^q=V$ and $V^q=W$. This property is at the core of our
method for creating relations.  Starting from $A$ and $B$ two
polynomials in $\F_q[U,V]$, we construct two big polynomials:
\[A^qB - A B^q = \prod_{\alpha \in \mathbb{P}_1(\F_q)} (A-\alpha B)
\] in one hand, and:
\[A(V,W)B(U,V) - A(U,V) B(V,W) 
\] in the other hand. Each polynomial can be considered as an element of the function
field $\F_q(\C)$. Writing the divisor associated to each side, we can
write down an equality between the image of each divisor in
$\F_{q^k}$. For polynomial themselves, the image is simply obtained by
evaluation at $\Phi(F)$. This can be extended to divisors as explained in
Section~\ref{sec-maptofield}. The equality of the two sides comes from the Frobenius
relations between the coordinates of $\Phi(F)$.

\item {\bf Relation collection.} We sieve on pairs of polynomials $(A,B)$ such 
that $A = g_1 + \alpha g_3$ and
$B= g_1+\beta g_2 + \gamma g_3$ where $\alpha, \beta, \gamma \in \F_q$ 
and $g_1, g_2$ and $g_3$ are given polynomials constructed by linear combination of 
the monomials $U, V, UV$ and $1$.
Set the factor base $\FF$ as all the divisors of $\E$ with height at most~$3$.
On one side  $\prod_{\alpha \in \mathbb{P}_1(\F_q)} (A-\alpha B)$ will always lead to divisors
that can be written as sum of divisors of $\FF$, and on the other 
side $A(V,W)B(U,V) - A(U,V) B(V,W) $ have a low enough height so that the probability that the 
related divisor $\mathcal{D}$ splits in factor base elements is high
enough to get as many relations we want.

Since we only need three degrees of freedom from the four monomials $U, V, UV$ and $1$, we choose $g_1, g_2$ and
$g_3$ all going through a common point. This nicely reduces the
degree of the divisors appearing in the decomposition of the terms
$A-\alpha B$. This is essential in making the probability of success
during the relation collection phase good enough.

\item {\bf Linear algebra and individual logarithm. } Thanks to the
  action of Frobenius, we can reduce the size of factor base by a
  factor $k$.  In other words, this reduces the effective size of
  the factor base $\FF$ to $O(q^3/k)$. As a consequence, the cost of
  the sparse linear algebra, with $O(q)$ entries per equation, is of
  $O(q^7/k^2)$ arithmetic operations. Note that, when $k$ is chosen
  close to $q$, this matches the $O(q^5)$ asymptotic complexity
  obtained for this step in~\cite{DBLP:conf/asiacrypt/JouxP14}.

\item {\bf Descent Phase.} Finally, we need a descent phase to
  conclude. We give the necessary tools to adapt  existing
  methods in this context.

\end{enumerate}

\paragraph{Outline.} Section~\ref{preliminaries} gives algebraic preliminaries for this work.

In this paper, we focus on the algorithmic aspects and describe our
practical elliptic Frobenius algorithm, that can be helpful to fully
understand the general idea. Being aimed at practicality, this
algorithm is heuristic. From a performance analysis point of view, our
heuristic approach almost achieves the same efficiency as the best pre-existing
practical algorithm for DLP in small characteristic finite fields. {\it Almost},
because there is a glitch in the analysis of the fast computation of
the extended factor base. However, despite this glitch, we were able
to implement and use the elliptic representation approach to compute logarithms of an extended
factor base for the finite field $\F_{3^{1345}}.$

We highlight the heuristics we use as far as we can, in order to
clarify the difference with the provable algorithm of~\cite{KW19}
or~\cite{GuidoLido2}. 

In Section~\ref{representation} we give our variation on the
representation of the target finite field while
Section~\ref{collection} details how to get relations. Finally
Section~\ref{extended} deals with factor base extension and with the
individual logarithms phase.

\section{A Refresher on the Function Field Sieve Machinery}
\label{preliminaries} 
Many concepts used here originate from the Function
Field Sieve (FFS) algorithm~\cite{BLP:journals/iandc/AdlemanH99}. The aim of this first section is not to
describe FFS itself, but to describes these
concepts in a slightly more general form than the original description
of Adleman and Huang article.

\subsection{Algebraic preliminaries}

Let $\EE=\GF{q}$ denote a finite field.  Let $\C$ be a non-singular
curve in the $n$-dimensional projective space $\PP_n(\GFbar{q})$
defined over $\EE$ and $\Fr$ denote the Frobenius map on
$\PP_n(\GFbar{q})$. We let $\EE(\C)$ denote the function field of
$\C$ over $\EE$. More details can be found in~\cite{DBLP:reference/crc/2005ehcc} if 
needed. A {\it discrete valuation} on $\EE(\C)$ is a map
$\valua$ from $\EE(\C)$ to $\Z$ such that for all $x,y \in \EE(\C)$ we
have:
\begin{enumerate}
\item $\valua(xy)=\valua(x)\,\valua(y);$
\item $\valua(x+y)\geq\min(\valua(x),\valua(y));$ 
\item $\valua(x+y)=\min(\valua(x),\valua(y))$ when $ \valua(x)\neq \valua(y)$.
\end{enumerate}
We define an equivalence relation between valuations by saying that
two valuations $\valua$ and $\valua'$ are equivalent whenever there
exists a non zero rational constant $\alpha$ such that  for all $x
\in \EE(\C)$, we have $\valua'(x)=\alpha\,\valua(x).$ 
We recall that a {\it place} of $\EE(\C)$ is an
equivalence class of discrete valuations of $\EE(\C)$ which are trivial
on $\EE$. The set of places of $\EE(\C)$ is denoted by
$\Sigma_{\EE(\C)}$. In every place $\pcal$, there exists a unique
valuation whose value group is $\Z$, it is called the \textit{normalized valuation} of $\pcal$ and denoted $\valua_{\pcal}.$

We recall that, for a non-singular curve $\C$, there is a one-to-one
correspondance between places of $\EE(\C)$ and Galois orbit of points
on $\C$. The {\it degree} of a place $\pcal$ is the number of points in the
corresponding orbit, we denote it by $\Deg(\pcal)$.

The {\it divisor group} $\Div(\C)$ of $\C$ (over $\EE$) is defined as the free
abelian group over $\Sigma_{\EE(\C)}$. An element $D$ of $\Div(\C)$
is expressed as:
$$
D=\sum_{\pcal \in \Sigma_{\EE(\C)}}n_{\pcal}(\pcal),
$$
where each $n_{\pcal}\in \Z$ and $n_{\pcal}=0$ for all but finitely many
places $\pcal$.
Since each place corresponds to a Galois orbit of points, a divisor
$D$ can also be given in the alternative form:
$$
D=\sum_{P \in \C/\GFbar{q}}n_{P}(P),
$$
where each $n_{P}\in \Z$, $n_{P}=0$ for all but finitely many
points and $n_P=n_Q$ if $P$ and $Q$ belong to the same Galois orbit of
points. A divisor $D$ is said to be {\it prime} when $D=(\pcal)$ for a place $\pcal \in \Sigma_{\EE(\C)}$.

The {\it degree} of a divisor $D$ is defined as:
$$
\Deg(D)=\sum_{\pcal \in \Sigma_{\EE(\C)}}n_{\pcal}\Deg(\pcal)=\sum_{P \in \C/\GFbar{q}}n_{P}.
$$
In this paper, a degree-$0$ divisor that is the difference between a prime divisor 
  and the right number of times the point at infinity $\OO$ is defined as an 
  \textit{elementary divisor}. In particular, any elementary divisor associated to a point  
  $Q \in \C/\F_{q^d}$ is a divisor of the form: 
\[ \sum_{i=0}^{d-1}\Fr^i(Q)-d(\OO).\]

A divisor $D$ is called {\it effective} when $n_{\pcal}\geq 0$ for all
$\pcal$. Any divisor $D$ can be uniquely written as a difference of
two effective divisors in the form $D=D_0-D_{\infty}$, where:
$$
D_0 = \sum_{\begin{array}{c}\pcal \in 
                \Sigma_{\EE(\C)}\\n_{\pcal}\geq
              0\end{array}}n_{\pcal}(\pcal) \quad \mbox{and} \quad
D_{\infty}= \sum_{\begin{array}{c}\pcal \in 
                \Sigma_{\EE(\C)}\\n_{\pcal}<0\end{array}}-n_{\pcal}(\pcal).
$$

The degree map from $\Div(\C)$ to $\Z$ is a group morphism. Its kernel
is denoted $\Div_0(\C)$ and called the group of degree-$0$ divisors of
$\C$, it is a subgroup of  $\Div(\C)$.

We define the map\footnote{Other authors often use the notation div but we do not want the map to be mistaken with the group.}  $\Xi$ that sends an element $f \in
\EE(\C)^{*}$ to a divisor in the following way:
\[\begin{array}{cccc}
\Xi:& \EE(\C)^{*}&\mapsto& \Div(\C)  \\
&f&\mapsto&\displaystyle \Xi(f) =\sum_{\pcal \in \Sigma_{\EE(\C)}}\valua_{\pcal}(f)\,\pcal.
\end{array}
\]
A divisor associated to a function in the above way is called a {\it
  principal divisor}. 
  The image of $\Xi$, i.e. the set of all
principal divisors, is denoted $\Princ(\C)$. All principal divisors
have degree $0$ and $\Princ(\C)$ is a subgroup of $\Div_0(\C)$. Every
principal divisor can also be written as a difference of effective
divisors as:
$$\Xi(f)= \Xi(f)_{0}-\Xi(f)_{\infty}.$$

The places (or points) that occur in $\Xi(f)_{0}$ or
$\Xi(f)_{\infty}$ are respectively called the {\it zeroes} or {\it
  poles} of $f$.
Note that for two functions $f$ and $g$ of  $\EE(\C)^{*},$ we have
$\Xi(f)=\Xi(g)$ if and only if there exists an element $\alpha \in
\EE^{*}$ such that $g=\alpha\, f$.

Since $\Princ(\C)$ is a subgroup of $\Div_0(\C)$, we can form the
quotient group, which is called the \textit{Picard group} (or divisor class
group) of $\C$ and denoted  $\Pic_0(\C)$. Two divisors have the same
representative in the Picard group if and only if their difference is principal.

\subsection{\label{sec-maptofield}A tool from FFS:  sending a degree-0 divisor into a finite field}
With these algebraic objects in hand, we can now introduce the main
tool that we need to import from the Function Field Sieve. Let
$\EE=\GF{q}$ be a finite field, $\LL=\GF{q^k}$ be a finite extension
of $\EE$ and $\C$ be as before a non-singular curve
defined over $\EE$. Let $F$ be a point of $\C/\LL$ such that the
coordinates of $F$ generate $\LL$ over $\EE$.
Given an arbitrary element $f\in \EE(\C)^{*}$ which does not have
$F$ as a pole, we can evaluate $f$ at $F$ and obtain a value in~$\LL$. 
As explained in~\cite{BLP:journals/iandc/AdlemanH99}, this process can be generalized from
functions to a large subset of divisors of $\C$. Clearly, since
$\alpha\,f$ and $f$ have the same divisor for any $\alpha\in \EE$, we
need to proceed with care.

First, we define  a map $\Psi$ from $\Princ(\C)$ to $\LL/\EE^*$ defined as
follows:
\[\begin{array}{cccc}
\Psi:& \Princ(\C)&\mapsto&  \LL/\EE^{*}\\
&D&\mapsto& \Psi(D)=f(F), 
\end{array}
\]
where $f$ is an arbitrary  function such that $\Xi(f)=D$. Since the
result is only considered up to multiplication by an arbitrary
constant in $\EE^{*}$, it is independent of the particular choice of
$f$.

To generalize to more divisors, we now consider a degree-$0$ divisor
$D$ together with an integer $h\in \N^{*}$ such that $hD$ is principal
and $h$ is coprime to the order of $\LL^*/\EE^{*}=(q^k-1)/(q-1)$. For
such a divisor, we extend the definition by letting:
$$
\Psi(D)=\Psi(hD)^{1/h}.
$$
In the terminology of~\cite{BLP:journals/iandc/AdlemanH99}, we evaluate at $F$ the
``surrogate'' 
function that we have associated to $D$ thanks to the
multiplication by $h$. Note that replacing $h$ by any other $h'$
satisfying the conditions does not change the value of $\Psi(D)$.
Furthermore, remark that if $\Psi$ is defined on $D$ and $D'$, it is
also defined on $D+D'$ and $\Psi(D+D')=\Psi(D)\cdot
\Psi(D')$. Similarly, if $\Psi(D)$ is defined and non-zero, then
$\Psi(-D)=1/\Psi(D)$. 

From a computational point of view, if $D$ has a small support, then
$\Psi(D)$ can be efficiently computed by using Miller's algorithm to
compute the evaluation at $F$ of the function corresponding to
$hD$. See Section~\ref{diagram} for more details.

In the Function Field Sieve, this tool is used as part of the
commutative diagram that underlies the construction of multiplicative
relations. However, it is not used directly in the algorithm, only in
its correctness proof. Similar, in our elliptic representation
algorithm, the map $\Psi$ is not really necessary to perform
computations. However, having it at our disposal gives a very useful
tool for checking the correctness of relations, thus helping to remove
undesirable implementation bugs.

\section{Representation of the Target Finite Field}
\label{representation}

Let $q$ and $k$ respectively be a prime power and the extension degree
of the field. Our aim is to compute discrete logarithms in it. Let's
write $p$ its characteristic. Since we want to define the extension
as $\F_q[x_P]$, with $x_P$ the abcissa of some point on an elliptic
curve, while the construction naturally writes it as $\F_q[x_P,y_P]$,
there is a small risk of producing a subextension (missing a last
quadratic extension) when $k$ is even. If this happens, it suffices
to replace $k$ by $2k$ during the initial choice of the elliptic
basis. As a consequence, we can safely ignore this point in the sequel.

\subsection{Choosing the elliptic basis}
The representation step of our algorithm starts by forming an elliptic
curve over a small extension of $\F_{q}$ with cardinality a multiple
of $k$.
The following result explicits bounds with respect to $q$ and $k$
for both the extension degree and the multiplying factor.

\begin{theorem}
\label{iliade}
If $q$ and $k$ be a prime power and a positive integer, then there exist
$\mu$ and $\nu$ two integers and
an elliptic
curve over~$\F_{q^\mu}$ with cardinality $\nu k$
such that $\mu \leq  \lceil \log(k^2/4)/\log q\rceil +1$.
%and
%$\nu < p+1+2 \sqrt{p}$.
\end{theorem}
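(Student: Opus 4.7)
The plan is to combine Hasse's bound with the Deuring--Waterhouse existence theorem, which guarantees an elliptic curve over $\F_Q$ with exactly $Q + 1 - t$ points whenever $|t| \leq 2\sqrt{Q}$ and $\gcd(t,p) = 1$, where $p$ is the characteristic. First I would set $\mu = \lceil \log(k^2/4)/\log q\rceil + 1$, which yields $q^{\mu-1} \geq k^2/4$ and therefore $q^\mu \geq qk^2/4$. The Hasse interval
\[\bigl[q^\mu + 1 - 2\sqrt{q^\mu},\ q^\mu + 1 + 2\sqrt{q^\mu}\bigr]\]
then has length at least $2k\sqrt{q} \geq 2\sqrt{2}\,k > 2k$, hence contains at least two consecutive multiples of $k$, say $N_1 = \nu_1 k$ and $N_2 = N_1 + k$.

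The next step is to select one of these two candidate orders whose associated trace $t := q^\mu + 1 - N$ is coprime to $p$. Writing $q = p^e$, we have $q^\mu + 1 \equiv 1 \pmod{p}$. If $p \mid k$, then both traces are $\equiv 1 \pmod{p}$ and either choice is valid. Otherwise $p \nmid k$, so $t_1$ and $t_2 = t_1 - k$ lie in distinct residue classes modulo $p$, at most one of them is $\equiv 0 \pmod{p}$, and the other works. Applying Deuring--Waterhouse to the resulting pair $(q^\mu, t)$ then produces an elliptic curve over $\F_{q^\mu}$ with cardinality $\nu k$ for the appropriate $\nu \in \{\nu_1, \nu_1 + 1\}$.

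The only delicate point is citing the right version of the existence theorem: imposing $\gcd(t,p) = 1$ is exactly what makes the classical Deuring statement apply directly, without needing to navigate the finer supersingular cases of the full Waterhouse classification. Apart from this bookkeeping, the argument reduces to elementary counting inside the Hasse interval, and it remains robust for small $k$ since the estimate $q^\mu \geq qk^2/4$ together with $q \geq 2$ leaves room to spare.
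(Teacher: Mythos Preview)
Your proof is correct and follows essentially the same approach as the paper: locate two consecutive multiples of $k$ in the Hasse interval and apply the Deuring--Waterhouse existence theorem to whichever one has trace coprime to $p$. The only cosmetic difference is that the paper splits into the cases $k < 2\sqrt{q}$ (taking $\mu = 1$) and $k \geq 2\sqrt{q}$ (reducing to the first case over $\F_{q^\mu}$), whereas you set $\mu$ directly to the stated upper bound and argue uniformly---a minor streamlining rather than a different route.
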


\begin{proof}
  Let $p$ denote the characteristic of $\F_{q}$.
  We can perform following case by case analysis:
\begin{enumerate}
\item
First, let us assume that $k <2 \sqrt{q}$. Thus
there exist at least two multiples $\nu k$ and 
$(\nu +1) k$ of $k$ in the Hasse interval
$\left]q+1-2 \sqrt{q}, q+1+2 \sqrt{q} \right[$.
Furthermore, we may assume that $\nu k<q+1.$ 

If $p$ divides $k$ then $p$ cannot divide the trace $t=q+1- \nu k$.
Let us recall now the following result:

\begin{corollary}[of Waterhouse's theorem~\cite{Waterhouse1969}]
For each value of the characteristic~$p$, for any extension degree~$n$
and for every integer $t$ in $\left]-2 \sqrt{p^n}, 2 \sqrt{p^n} \right[$
such that $t \not \equiv 0 \mod p$, there exists an elliptic curve over
$\F_{p^n}$ whose number of rational point is exactly ${p^n}+1-t$.
\end{corollary}

The reader can for example find a proof in~\cite{Ughi}. Note that we can run into
some cases where the characteristic does divide the trace and yet such that
there exists such a curve.
Theorem~$4$ in~\cite{Ughi} gives the exhaustive list of
these special cases.

Back to our discussion of the case where $p$ divides $k$, we see that,
Waterhouse's theorem yields the existence of an elliptic curve over $\F_{q}$ with cardinality
$\lambda k$, for all values $\lambda k$ in the Hasse interval. In
particular, there exist a curve of cardinality $\nu k$.

If $p$ does not divide $k$ then two sub-cases occur. Either $q+1 - \nu k$
is not a multiple of~$p$, and Waterhouse's theorem permits to conclude again that there exists
an elliptic curve over $\F_{q}$ with cardinality $\nu k$,
or it is. In the second case, $p$ doesn't divide  $(\nu+1) k$ and we
obtain a curve with that cardinality.

Either way,  when  $k <2 \sqrt{q}$ we always find a curve over the base field $\F_{q}$.

\item If $k \geq 2 \sqrt{q}$ then there is no guarantee of the
  existence of two multiples of $k$ in the Hasse interval.  Thus,
  unless we are lucky, we need to increase the size of the finite
  field to get a larger interval. Let $\mu$ be the smaller integer
  such that $k <2 \sqrt{q^\mu}$. Applying the previous case on this
  extended field, we see that there always exists an elliptic curve
  over $\F_{{q^\mu}}$ with cardinality $\nu k$ such that
  $\nu k < {q^\mu}+1+2 \sqrt{p^\mu}$. To conclude, notice that the
  additional extension degree $\mu$ that we need is (at most) equal to
  the ceiling of $ \log(k^2/4)/\log q$.

\end{enumerate}
\end{proof}

Once we have found $\E$, it allows us to define the finite field
$\F_{q^{\mu k}}$, where $\mu$ is the extra extension degree needed to
find $\E$. To lighten notations, we assume without loss of generality
that $\mu=1$. Indeed, it suffices to redefine a new value for $q$
equal to the previous value $q^\mu$. Thanks to the upper bound on the
extension degree, we see that it does not affect the quasi-polynomial time complexity
of the algorithm.

We further assume
that $\E$ contains a point in $\F_q$ of order $k$. If necessary, apply
low-degree isogenies to the initial curve $\E$ until a suitable one is
obtained. 

Then construct a point $F$ whose coordinates in the algebraic closure
satisfy:
 \[\pi(F) = F +P_1,
 \] where $\pi$ is the $q$-th power Frobenius action. Write the
 coordinates of $F$ as $(\theta, \tau).$ From~\cite{DBLP:journals/ffa/CouveignesL09},
 we know that $\F_{q^k}=\F_q[\theta,
 \tau]$. Furthermore, from our assumption on $k$, we have $\F_{q^k}=\F_q[\theta]$.

Note that, when focusing on the practical variation of the algorithm, it
is important to have $k$ as large as possible compared to $q$.
The above proof only guarantees that $q=O(k^2)$, however, in the best
cases we can have $q=O(k)$.

\subsection{Representing the curve with a different model}
\label{construction}

We introduce here a model that represents elliptic curves in the
three-dimensional affine space $\F_q[U,V,W]$. To construct this new
model, we start from a curve $\E$ together with a $k$-torsion point
$P_1$ of~$\E/\F_q$. We emphasize that $P_1$ has all its coordinates in
the field~$\F_q$. Let $(x_1,y_1)$ denote the coordinates of $P_1$ and
$(x_\ell, y_\ell)$ the coordinates of $P_\ell=\ell\,P_1$ for
$\ell \in [2, \cdots, k-1]$.

\paragraph{Adding some structure.}
The idea is to create a new model of $\E$ in which we artificially
inject extra desirable properties. Namely, for any point
$Q \in \E/\bar{\F}_{q}$, we represent it by the triple of abcissae of
the points $Q-P_1$, $Q$ and $Q+P_1$ , on the one hand and $-Q, P_1$, $Q-P_1$ on the
other hand (see Figure~\ref{fig frob2}).

In this model, there is an easy way to add the $\pi(F) = F +P_1$
constraint of the Couveignes and
Lercier~\cite{DBLP:journals/ffa/CouveignesL09} construction of
elliptic bases. Graphically, this is shown on Figure~\ref{fig frob3}.

Indeed, for point $F$ we see that the triple of coordinates is
$(x_{\pi^{-1}(F)},x_{F},x_{\pi(F)})$. Furthermore, for $\pi(F)$ the
triple is $(x_{F},x_{\pi(F)}, x_{\pi^2(F)}).$ As a consequence, the
first two coordinates can be obtained by a simple shift.

Furthermore, it is possible to recover the missing coordinate of
$\pi(F)$ from the first two, in a way similar to Montgomery's ladder
technique.

\paragraph{Formal definition of $\C$.}
Now that we have captured our intuition, we give the equations of the
curve in the new model. First, we recall the definition of the third Semaev
polynomial $S_3$, it is an irreducible and symmetric polynomial of
degree-$2$ in $\F_q[U,V,W]$.
Furthermore, for any triple of points
$Q_1 = (x_{Q_1}, y_{Q_1}), Q_2 = (x_{Q_2}, y_{Q_2})$,
$Q_3 = (x_{Q_3}, y_{Q_3}) \in \E(\bar \F_q) \setminus \{ \OO\}$, we
have:
\[S_3(x_{Q_1},x_{Q_2},x_{Q_3}) =0 \Leftrightarrow \exists (e_1, e_2, e_3) \in \{ -1,1\}^3, e_1 Q_1+e_2Q_2+e_3Q_3= \OO.
\]

We use this polynomial to describe the image $\C$ of the rational map:
$$
\begin{array}{rcc}
\Phi : \E &\mapsto& \GFbar{q}^3\\
Q & \mapsto & (x_{Q-P_1}, x_Q, x_{Q+P_1})
\end{array}
$$

For every point $Q \in \E/\bar{\F}_{q}$, we use $S_3$ to rewrite the
three simple identities $(Q-P_1) - Q + P_1= \OO$, $Q-(Q+P_1) + P_1= \OO$ and
$(Q-P_1) - (Q + P_1) + P_2= \OO$. This shows that 
the point $(x_{Q-P_1}, x_{Q},x_{Q+P_1})$ is a common root of  the polynomials $S_3(U,V,x_1)$, $S_3(V,W,x_1),$ and $S_3(U,W,x_2)$.
Yet, the
variety defined by these $3$ equations
contains several components. 
One of the components is irreducible and
has dimension $1$ whereas the others correspond to extraneous points. 
This irreducible component is the curve~$\C$ given by the equations:
\[S_3(U,V,x_1)=0,\quad \frac{S_3(U,V,x_1)-S_3(V,W,x_1)}{U-W}=0, \quad S_3(U,W,x_2)=0.\]
In fact $\C$ is isomorphic to the initial elliptic curve
$\E$.  For more details,
see Appendix~\ref{appendix}.
 \begin{figure}
\begin{center}
\begin{tikzpicture}[domain=-3:3, scale=1]

  % la courbe
\begin{scope}
    \clip (2,0) rectangle (4,4);
     \draw[domain=2:3,smooth,variable=\x,blue]
     plot({\x},{sqrt((\x^3-4*\x))}) node[anchor=-180]{\footnotesize $\E$}; 

\end{scope}
\begin{scope}
    \clip (-2,0) rectangle (2,4);
     \draw[domain=-2:0,smooth,variable=\x,blue] plot({\x},{sqrt((\x^3-4*\x))}); 
\end{scope}
\begin{scope}
    \clip (2,-4) rectangle (4,4);
 %\draw[color=blue,smooth] plot(\x,{-sqrt(max(0,\x^3-4*\x))});
     \draw[domain=2:3,smooth,variable=\x,blue] plot({\x},{-sqrt((\x^3-4*\x))}); 
\end{scope}
\begin{scope}
    \clip (-2,-4) rectangle (2,4);
     \draw[domain=-2:0,smooth,variable=\x,blue] plot({\x},{-sqrt((\x^3-4*\x))}); 
\end{scope}  

%les axes    
\draw[->, line width=0.5pt] (-3,0) -- (3,0);
\draw[->, line width=0.5pt] (0,-4) -- (0,4);
%les points
\coordinate (P1) at (-1.8,- {sqrt (-(1.8)^3+1.8*4)});
\coordinate (Q) at (2.7, - {sqrt ((2.7)^3-2.7*4)});
\coordinate (-Q) at (2.7,  {sqrt ((2.7)^3-2.7*4)});
\coordinate (Q-P1) at (-0.1, 0.4);
\coordinate (aux) at (-0.75, -1.6);
\coordinate (Q+P1) at (-0.75, 1.6);
\draw [color=red] (P1) -- (Q);
\draw [color=red] (P1) -- (-Q);

\fill [black] (P1) circle (1pt) node[anchor=-20]{\footnotesize $P_1$};
\fill [black] (Q) circle (1pt) node[anchor=-180]{\footnotesize $Q$};
\fill [black]  (Q-P1) circle (1pt) node[anchor=170]{\footnotesize $Q-P_1$};
\fill [black]  (aux) circle (1pt) ;
\fill [black]  (-Q) circle (1pt) ;
\fill [black]  (Q+P1) circle (1pt) node[anchor=-90]{\footnotesize $Q+P_1$};

%les nouveautes
\coordinate (Qa) at (2.7, 0);
\coordinate (Q-P1a) at (-0.1,0);
\coordinate (Q+P1a) at (-0.75,0);

\draw [color=green, dashed] (Q) -- (Qa);
\draw [color=green, dashed] (Q-P1) -- (-0.1,0);
\draw [color=green,dashed] (Q+P1) -- (-0.75,0);
\fill [black] (Qa) circle (1pt) node[anchor=-90]{\footnotesize $x_Q$};
\fill [black] (Q-P1a) circle (1pt) node[anchor=120]{\footnotesize $x_{Q-P_1}$};
\fill [black]  (Q+P1a) circle (1pt) node[anchor=60]{\footnotesize $x_{Q+P_1}$};
\draw [->, color=green] (Q-P1a) to [out=90, in=90] (Qa);
\node[green] at (1.3,1) {$\pi$} ;
\draw [->, color=green] (Qa) to [out=-90, in=-90] (Q+P1a);
\node[green] at (1,-1.3) {$\pi$} ;

\end{tikzpicture}
\end{center}
\caption{\label{fig frob2}Frobenius action on abscissae as we would like.}
\end{figure}
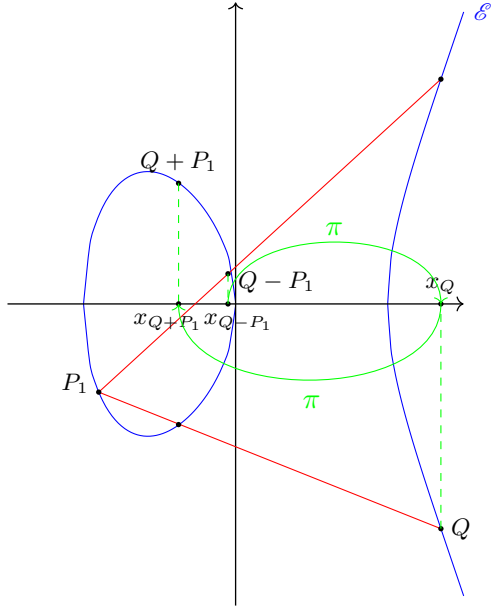

 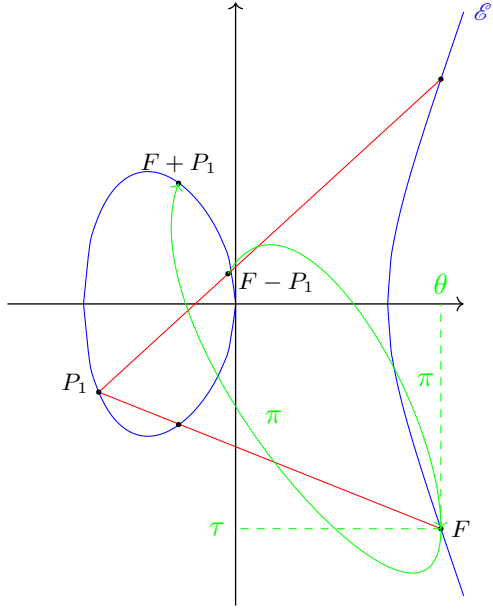
\begin{figure}
\begin{center}
\begin{tikzpicture}[domain=-3:3, scale=1]

  % la courbe
  \begin{scope}
    \clip (2,0) rectangle (4,4);
     \draw[domain=2:3,smooth,variable=\x,blue]
     plot({\x},{sqrt((\x^3-4*\x))}) node[anchor=-180]{\footnotesize $\E$}; 

\end{scope}
\begin{scope}
    \clip (-2,0) rectangle (2,4);
     \draw[domain=-2:0,smooth,variable=\x,blue] plot({\x},{sqrt((\x^3-4*\x))}); 
\end{scope}
\begin{scope}
    \clip (2,-4) rectangle (4,4);
 %\draw[color=blue,smooth] plot(\x,{-sqrt(max(0,\x^3-4*\x))});
     \draw[domain=2:3,smooth,variable=\x,blue] plot({\x},{-sqrt((\x^3-4*\x))}); 
\end{scope}
\begin{scope}
    \clip (-2,-4) rectangle (2,4);
     \draw[domain=-2:0,smooth,variable=\x,blue] plot({\x},{-sqrt((\x^3-4*\x))}); 
\end{scope}  
%les axes    
\draw[->, line width=0.5pt] (-3,0) -- (3,0);
\draw[->, line width=0.5pt] (0,-4) -- (0,4);
%les points
\coordinate (P1) at (-1.8,- {sqrt (-(1.8)^3+1.8*4)});
\coordinate (Q) at (2.7, - {sqrt ((2.7)^3-2.7*4)});
\coordinate (-Q) at (2.7,  {sqrt ((2.7)^3-2.7*4)});
\coordinate (Q-P1) at (-0.1, 0.4);
\coordinate (aux) at (-0.75, -1.6);
\coordinate (Q+P1) at (-0.75, 1.6);
\draw [color=red] (P1) -- (Q);
\draw [color=red] (P1) -- (-Q);

\fill [black] (P1) circle (1pt) node[anchor=-20]{\footnotesize $P_1$};
\fill [black] (Q) circle (1pt) node[anchor=-180]{\footnotesize $F$};
\fill [black]  (Q-P1) circle (1pt) node[anchor=170]{\footnotesize $F-P_1$};
\fill [black]  (aux) circle (1pt) ;
\fill [black]  (-Q) circle (1pt) ;
\fill [black]  (Q+P1) circle (1pt) node[anchor=-90]{\footnotesize $F+P_1$};

%les nouveautes
\coordinate (Qa) at (2.7, 0);
\coordinate (Qb) at (0, - {sqrt ((2.7)^3-2.7*4)});
\coordinate (Q-P1a) at (-0.1,0);
\coordinate (Q+P1a) at (-0.75,0);

\draw [->, color=green] (Q-P1) to [out=60, in=90] (Q);
\node[green] at (2.5,-1) {$\pi$} ;
\draw [->, color=green] (Q) to [out=-90, in=-110] (Q+P1);
\node[green] at (0.5,-1.5) {$\pi$} ;
\draw [dashed, color=green] (Q) -- (Qa) node[above] {$\theta$};
\draw [dashed, color=green] (Q) -- (Qb) node[left] {$\tau$};

\end{tikzpicture}
\end{center}
\caption{\label{fig frob3}Frobenius action on the point $F$.}
\end{figure}

\paragraph{Point in $\E$ with coordinates in the target finite field.}
Let $F \in \E$ be such that:
 \begin{equation}
 \label{frobenius}
\pi(F)=F+P_1.
\end{equation}

\begin{lemma} 
\label{dansleboncorps}
If $F \in \E/\bar{\F}_q$ is such that $\pi(F)=F+P_1$ then
$F \in \E/\F_{q^k}$. Furthermore, letting $(\theta, \tau)$ denote the
coordinates of $F$, we have $\F_q[\theta, \tau]=\F_{q^k}$. In
particular, there exist at least $k$ rational points verifying the same property.
\end{lemma}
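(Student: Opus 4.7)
The plan is to chain three short steps, each exploiting the Frobenius relation $\pi(F)=F+P_1$ together with the fact that $P_1 \in \E(\F_q)$ has order exactly $k$.

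First I would check that $F$ is defined over $\F_{q^k}$. Since $P_1$ has all its coordinates in $\F_q$, we have $\pi(P_1)=P_1$. Iterating $\pi(F)=F+P_1$ and using that $\pi$ commutes with the group law on $\E$, a straightforward induction gives
\[
\pi^{i}(F)=F+iP_1 \quad \text{for every } i\geq 0.
\]
In particular $\pi^{k}(F)=F+kP_1=F$ because $P_1$ is a $k$-torsion point, so $F$ is fixed by $\pi^{k}$ and therefore lies in $\E/\F_{q^k}$.

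Next, I would pin down the exact field of definition of $F$. Let $L=\F_q[\theta,\tau]$, which by definition is the smallest subextension of $\F_{q^k}$ containing both coordinates of $F$. Write $L=\F_{q^d}$ with $d\mid k$. Since $F\in \E(L)$, it is fixed by $\pi^{d}$, so the formula above yields $dP_1=\OO$. But $P_1$ has order $k$, hence $k\mid d$, and combined with $d\mid k$ this forces $d=k$. Therefore $\F_q[\theta,\tau]=\F_{q^k}$, as claimed.

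Finally, for the count of points satisfying the same relation, I would simply translate $F$ by multiples of $P_1$. For every $j\in\{0,1,\dots,k-1\}$, set $F_j = F+jP_1$. Using again that $\pi(P_1)=P_1$, we compute
\[
\pi(F_j)=\pi(F)+j\pi(P_1)=F+P_1+jP_1=F_j+P_1,
\]
so each $F_j$ satisfies the Frobenius relation. The $k$ points $F_0,\dots,F_{k-1}$ are pairwise distinct because the order of $P_1$ is exactly $k$, which yields at least $k$ rational points with the desired property. The only subtle step is the middle one, where one must be careful to use the divisibility $d\mid k$ of the intermediate field together with the minimality forced by the order of $P_1$; everything else is a direct manipulation of the Frobenius relation.
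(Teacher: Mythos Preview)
Your proof is correct and follows essentially the same route as the paper: iterate the Frobenius relation to get $\pi^{i}(F)=F+iP_1$, conclude $\pi^{k}(F)=F$, and then observe that the $k$ translates $F+jP_1=\pi^{j}(F)$ all satisfy the same relation. In fact your argument is more complete than the paper's, which omits the middle step entirely; your divisibility argument ($L=\F_{q^d}$ with $d\mid k$, then $dP_1=\OO$ forces $k\mid d$) is the right way to fill that gap.
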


\begin{proof}
Let us  compute $\pi^k(F)=\pi^{k-1}(F +P_1)=\pi^{k-1}(F) +P_1 = \cdots =F+k P_1 $.
We know that $P_1$ has precisely order $k$ hence  $\pi^k(F)=F$. Besides, we note
that any point $\pi^i(F)$ for $i=1, \cdots , k-1$ satisfies Equation~\eqref{frobenius} too.
\end{proof}

As already mentioned, having possibly doubled the value of $k$ in
construction, we may assume that $\F_q[\theta]=\F_{q^k}.$ 
in our model of curve, the point $F$ is determined by the fact that
$S_3(\theta, \theta^q,x_1)=0$.
The abscissa~$\theta$ of~$F$ can thus be determined as a root of this
polynomial. Note that the choice of the ordinate~$\tau$ gives an
orientation on the direction of the Frobenius action. We choose the
letter $F$ to name this point as a mnemonic to remind that it
represents our target Finite Field and that it has a special relationship
with the Frobenius map.

\subsection{Commutative diagram}
\label{diagram}

From this, we derive the commutative diagram of Figure~\ref{fig:EllipticCD} which
serves as the basis for our elliptic Frobenius representation
algorithm. Note that the commutative diagram is above $\F
_{q^{k}}/\F_q^{*}$, as a consequence, our algorithm doesn't compute
the part of the discrete logarithm corresponding to
$\F_q^{*}$. However, this field is so small that this missing part can
be easily obtained.

\begin{remark}
This diagram could be simplified by removing references to
the function field~$\F_q(\E)$ and computing divisors on~$\C$
directly. However, when using standard computer algebra tools, it is
much simpler to work on divisors with the Weirstrass equation of $\E$.
\end{remark}

\begin{figure}[h]
\centering
\begin{tikzpicture}[commutative diagrams/every diagram]
  \node (P0) at (0cm,0cm) {$\begin{array}{c}
\F_q \left[ U,V \right] \times   \F_q \left[ U,V \right] \\
(A,B)
\end{array}$};
  \node (P1) at (-5cm,-2cm) {$ \begin{array}{c}
\F_q \left[ U,V,W \right] \\
\displaystyle 
A^q\,B-A\,B^q  
\end{array}$};
  \node (P1bis) at (-2cm,-2.45cm) {$ \begin{array}{c}
\displaystyle =\prod_{\alpha\in\PP_1(\F_q)}(A-\alpha\,B)
\end{array}$};

  \node (P2) at (5cm,-2cm) {$ \begin{array}{c}
\F_q \left[ U,V,W \right] \\
\displaystyle 
A(V,W)\,B(U,V)-A(U,V)\,B(V,W)
\end{array}$};
  \node (P3) at (-5cm,-4cm) {$ \displaystyle \F_q \left[ U,V,W \right]/I$};
  \node (P3bis) at (-3cm,-4cm) {$ \displaystyle =\F_q[\C]$ };
  \node (P4bis) at (2.5cm,-4cm) {$ \displaystyle \F_q[\C] =$};
  \node (P4) at (5cm,-4cm) {$ \displaystyle \F_q \left[ U,V, W \right]/I$};
 \node (P5) at (-5cm,-6cm) {$ \displaystyle \F_q(\E) $};
 \node (P5bis) at (-5cm,-8cm) {$ \displaystyle \Princ(\E) \subset Div^0(\E)$};
   \node (P6) at (5cm,-6cm) {$ \displaystyle \F_q(\E) $};
     \node (P6bis) at (5cm,-8cm) {$ \displaystyle \Princ(\E) \subset Div^0(\E)$};
  \node (P7) at (0cm,-10cm) {$  \displaystyle \F _{q^{k}}/\F_q^{*}$};
  \path[commutative diagrams/.cd, every arrow, every label]
    (P0) edge node {} (P1)
    (P0) edge node {} (P2) 
    (P1) edge node {$i$} (P3) 
    (P2) edge node {$i$} (P4) 
    (P3) edge node {$\Phi ^*$} (P5) 
    (P4) edge node {$\Phi ^*$} (P6) 
        (P5) edge node {$\Xi$} (P5bis) 
    (P6) edge node {$\Xi$} (P6bis) 
    (P5bis) edge node {$\Psi$} (P7) 
    (P6bis) edge node {$\Psi$} (P7) ;
\path[commutative diagrams/.cd]
 (P5bis) edge[dashed] node[below] {\footnotesize\begin{tabular}{c} Factor base elements\\
  are there.
  \end{tabular}}(P6bis) ;
\end{tikzpicture}
\caption{\label{fig:EllipticCD} Commutative diagram of our algorithm.}
\end{figure}
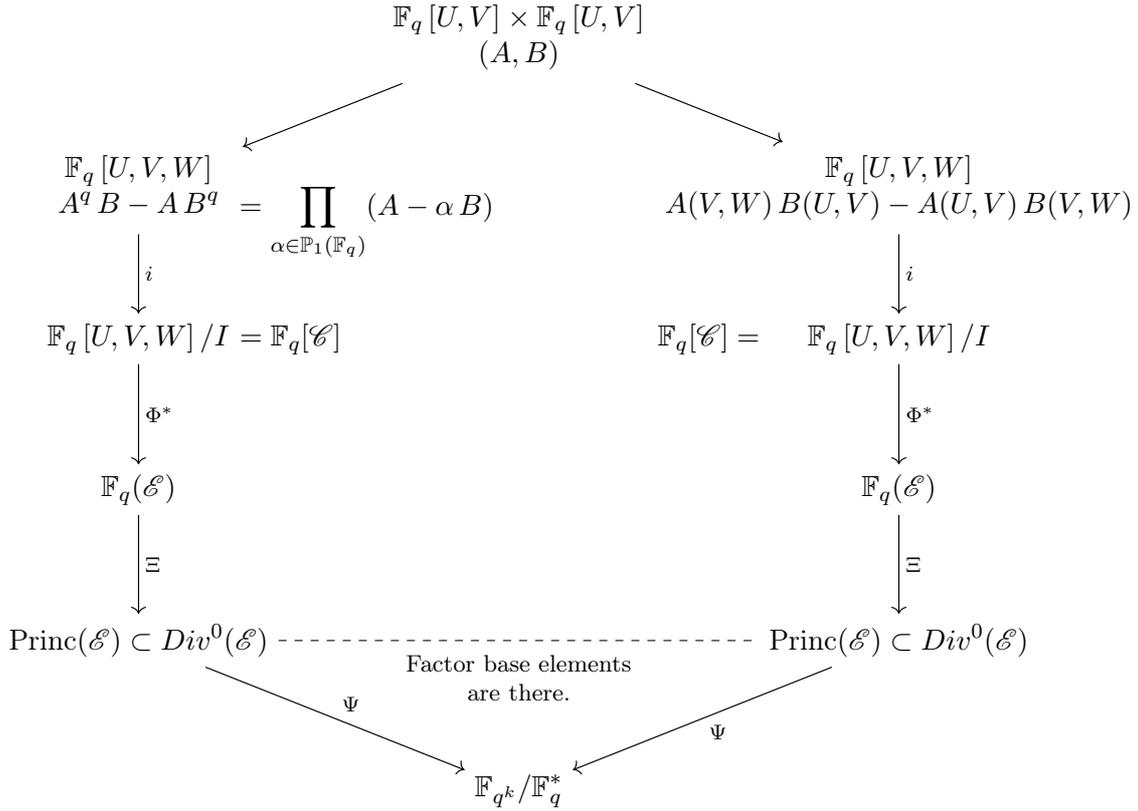

\paragraph{Explicit maps to $\F_{q^k}$ based on Miller algorithm.}

\noindent The first two maps of the diagram
are explicit and the two following ones are canonical
injections. $\Phi^*$ is given in Appendix and $\Xi$ is as defined in
Section~\ref{preliminaries}. In addition, we let $\Psi$ denote a multiplicative group
morphism that sends elements of $ \Princ(\E)$  to
$\F_{q^k}$. 
Yet, only defining $\Psi$ for principal divisors is not sufficient, since
in the relation collection phase we need first to factor divisors in~$ \Princ(\E)$
into elementary divisors before descending them into the finite field. Keep
in mind that elementary divisors have no reason to be principal.

For any divisor in $\Princ(\E)$
the first thing to do is to decompose it into 
elementary divisors in~$Div^0$.  Then, we note that since we wish to construct a group 
morphism that sends elements of  $Div^0 $ to
 $\F_{q^k}/\F_q^{*}$,
it suffices to describe this morphism for any of these elementary divisors.
Let us consider:
$$
\DD_e=\sum_{i=1}^{d}\pi^i(Q)-d(\OO),
$$ 
where $Q\in \E/\F_{q^d}$ is one of the conjugate points in the degree-$d$ place.
Fix a maximum degree $D$ for the places we consider and
let $N_D$ be the least common multiple of the cardinalities of $\E$
over each finite field $\F_{q^d}$ with $1\leq d \leq D$.

From this, we see that $N_D\DD_e$ is a principal divisor, thus there
exists a function $f_{\DD_e}$ in the variables $X$ and $Y$ unique up to multiplication by a constant
in $\F_q$ such that $\mbox{div}(f_{\DD_e})=N_D\DD_e$. 
We want to use the point $F$ with coordinates in the target finite field to define it.
Since $\theta$~and~$\tau$ are respectively the abscissa and the ordinate of
this point, it seems natural to send $X$ to~$\theta$ and
$Y$ to $\tau$, or, in other words, to evaluate the function on the point~$F$. 
However, since~$f_{\DD_e}$ is only defined modulo a constant in $\F_q$,
the result in the finite field would change depending on the choice of the function.
To annihilate this constant, we have to divide the evaluation on $F$ by the evaluation on $\OO$.
Hence to have a well-defined application $\Psi$ we set
$$ \Psi(\DD_e)=\left(f_{\DD_e}\left(F-\OO\right)\right)^{1/N_D}.$$
However, evaluation at $\OO$ isn't really necessary, since we are only
interested in values in $\F _{q^{k}}/\F_q^{*}$.
As done for bilinear pairings, this can be efficiently computed using
Miller's algorithm~\cite{JC:Miller04}.

\begin{remark}
  In order to raise to the power $1/N_D$ and get a uniquely defined
  value, we need to check that $N_D$ is invertible modulo the order of
  $\F _{q^{k}}^{*}/\F_q^{*}$. This condition needs to be tested for
  all the orders of the curve $\E$ in the extension fields
  $\F_q, \F_{q^2}, \cdots, F_{q^D}.$

  We analyze the condition more precisely in
  Appendix~\ref{rightcurve} and provide a replacement for $\Psi$ in
  the case where $N_D$ cannot be inverted.
\end{remark}

\paragraph{Intuition about $\Psi$ and commutativity of the diagram.}
This definition matches with the following intuitive one. 
To unsure the commutativity of the diagram we need to verify that 
$\Psi(\Xi(\Phi^*(i(A^qB-AB^q))))$ is equal in the finite field
to the element $\Psi(\Xi(\Phi^*(i(A(V,W)B-AB(V,W)))))$. 
Our intuition is that requiring in some sense:
\[U^q=V \qquad \hbox{and} \qquad V^q=W\] would suffice to
prove the commutativity.
We point out that one point of the elliptic curve~$\C$, namely $\Phi(F)$, precisely follows 
this restriction. Indeed, $\Phi(F)$ has abscissa
$x_{F-P_1}=x_{\pi^{(k-1)}(\pi(F-P_1))}=x_{\pi^{(k-1)}(F)} =\pi^{(k-1)}(\theta)$,
ordinate $ x_F=\theta$ and applicate $ x_{F+P_1}=x_{\pi(F)}=\pi(\theta)$. In a nutshell:
\[ \Phi(F)=[\theta^{q^{k-1}},\theta,\theta^q]
\]Thus, in the function field $\F_q(\C)$,  evaluation the functions at
$\Phi(F)$ gives the expected relationship to the Frobenius map. 
As a consequence, evaluation at $F$ after transporting back to the
function field of $\E$ using $\Phi^*$ also gives the desired behavior.

To conclude about the commutativity of the diagram it suffices to note that:
\[\Psi \circ \Xi \circ \Phi^*(U^q)=
(\theta^{q^{k-1}})^q=
\theta
=\Psi \circ \Xi \circ \Phi^*(V) \quad \mod  \F_q^*\]
 and similarly:
\[\Psi \circ \Xi \circ \Phi^*(V^q)=
\theta^q
=
\Psi \circ \Xi \circ \Phi^*(W)\quad \mod \F_q^*.\] 
Hopefully, since $\Psi \circ \Xi \circ \Phi^* \circ i$ is a morphism the equality in the finite field between images
of $A^qB-AB^q$ and $A(V,W)B-AB(V,W)$ holds too.

\paragraph{Extension of the diagram to bigger fields.}
As with classical Frobenius representation algorithm, the commutative
diagram can also be used when the coefficients of $A$ and $B$ are
taken in another extension $\F_{q^d}.$ In that case,  the commutative
diagram ends in the compositum of $\F_{q^d}$ and $\F_{q^k}$ (again
modulo $\F_{q}^*$). The only difference is that the identity in the
finite field is between the images of $A^qB-AB^q$ and
$A^{\pi}(V,W)B-AB^{\pi}(V,W)$, i.e. the coefficients of $A$ and $B$
need to be acted on by Frobenius.

\section{Harvesting Relations}
\label{collection}

\subsection{The usual systematic product}
As we can see in the previous diagram, our setting uses a mixture of 
the classical Function Field Sieve and of the Frobenius 
representation algorithms. As in the classical Function Field Sieve, our algorithm uses
function fields instead of polynomials when writing down
multiplicative relations. 
From Frobenius representation algorithms it
inherits the use of the systematic relation:
\begin{equation}
A^q\,B-A\,B^q  =\prod_{\alpha\in\PP_1(\F_q)}(A-\alpha\,B),
\end{equation}
where as in~\cite{DBLP:conf/asiacrypt/JouxP14}, when $\alpha$ 
is the point at infinity of $\PP_1(\F_q)$, the term
$A-\alpha\,B$ is used as a shorthand for $B$.
 For simplicity, we also use a bracket notation
and define:
$$
\bracket{A}{B}= A(V,W)\,B(U,V)-A(U,V)\,B(V,W).
$$
We underline that our bracket is $\F_q$-bilinear and
antisymmetric,  as in~\cite{DBLP:conf/asiacrypt/JouxP14}. 
Yet, we warn the reader of the difference between the definition
of our bracket and previous ones. Our bracket is equal to the entire fraction 
whereas the authors of~\cite{DBLP:conf/asiacrypt/JouxP14} 
only consider the numerator of this rational fraction.

\subsection{Choice of $A$ and $B$}
In the commutative diagram of Figure~\ref{fig:EllipticCD} and in the
above discussion, we indicate that relations are obtained from a
choice of two bivariate polynomials $A$ and $B$ in $U$ and
$V$. However, we need to specify how these polynomials are chosen and
which monomials they should contain.

As a preliminary, let us notice that $(A,B)$ and $(\alpha A, B)$ for
$\alpha \in \F_q$ lead to the same relation. Indeed, $(\alpha A)^q\,B-( \alpha A)\,B^q=
\alpha (A^q\,B-A\,B^q )  $ so the two divisors associated to the two
corresponding functions are equals. Thus $A$ and $B$ are chosen as
some kind of monic polynomials: the coefficient of the higher
monomial (in the lexicographic order for instance) must be equal to $1$.
Then we note that monomials divisible by $(UV)^2$
are not useful in $A$ and $B$. Indeed, when going to $\F_q[\C]$,
reduction modulo $S_3(U,V,x_1 )$ transforms these monomials 
into {\it smaller} monomials $U^2V, UV^2, U^2, UV, V^2, U, V$ and $1$.
Moreover,
it is natural to consider sets of monomials globally symmetric in $U$
and $V$. 

As a consequence for all those items, given a parameter $t\geq 1$ we construct
$A$ and $B$ as linear combinations of monomials from:
$$
\MM_t=\left\{ U^i, V^i, U^i\,V , U\,V^i | i \in [0\cdots t]\right\}.
$$
Each $\MM_t$ contains $4t$ distinct monomials\footnote{Not $4(t+1)$
  since each of $1$, $U$, $V$ and $UV$ are included twice.}.

\subsection{Defining a naive factor base}
\label{naive}
As shown in Figure~\ref{fig:EllipticCD}, we define  the factor base
as a subset of $Div^0(\E)$. We now explain how this subset is chosen.

\begin{defi}
\label{heightdefi}
Let $\DD$ be a divisor of $Div(\mathcal{V})$. The height of $\DD$ denotes the number of positive
points counted with multiplicity.
We write it $h(\DD)$.
\end{defi}

It matches with the following explicit definition: if the divisor $\DD$ is written as $\sum_{Q_i \in \mathcal{V}} e_i Q_i$ we have
$h(\DD) = \sum_{e_i>0} e_i$. 
Note that, for any prime or elementary divisor, the height is equal to
the degree of the corresponding place.

\begin{propo}
\label{sayuri}
If $f_1$ and $f_2$ are two functions of the function field associated to $\mathcal{V}$ then the following inequalities are verified:
\begin{enumerate}
\item $h(\Xi(f_1f_2)) \leq h(\Xi(f_1)) + h(\Xi(f_2))$.
\item $h(\Xi(f_1+f_2))\leq  \max(h(\Xi(f_1)) + \deg g_2\,,\,h(\Xi(f_2)) + \deg g_1) 
$
\\where $f_1$ (resp. $f_2$) has $g_1$ (resp. $g_2$) as denominator.
\item $h(\Xi(f_1)) \leq h(\Xi(f_2)) \quad \hbox{if $f_1$ and $f_2$ are polynomials such that $f_1$ divides $f_2$.}$
\end{enumerate}
\end{propo}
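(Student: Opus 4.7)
The overall strategy is to exploit the fact that $\Xi$ is a group morphism from $\EE(\mathcal{V})^{*}$ to $\Div(\mathcal{V})$, combined with the ultrametric property $\valua_\pcal(f_1 + f_2) \geq \min(\valua_\pcal(f_1), \valua_\pcal(f_2))$ of discrete valuations. I would address the three items in the order $1, 3, 2$, since the third follows almost directly from the first, while the second requires the most care.

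For item 1, since $\Xi$ is a morphism we have $\Xi(f_1 f_2) = \Xi(f_1) + \Xi(f_2)$, so at every geometric point $P$ the multiplicity satisfies $n_P(\Xi(f_1 f_2)) = n_P(\Xi(f_1)) + n_P(\Xi(f_2))$. The elementary pointwise inequality $\max(a+b, 0) \leq \max(a,0) + \max(b,0)$, summed over all points, gives the claim. For item 3, write $f_2 = q\, f_1$ with $q$ a polynomial, so $\Xi(f_2) = \Xi(q) + \Xi(f_1)$. At every finite point $P$ both $n_P(\Xi(q))$ and $n_P(\Xi(f_1))$ are non-negative, so $n_P(\Xi(f_2)) \geq n_P(\Xi(f_1)) \geq 0$ at finite places. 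Since polynomials have all their positive multiplicities at finite points, summing over finite $P$ gives $h(\Xi(f_2)) \geq h(\Xi(f_1))$.

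Item 2 is the main obstacle. The plan is to reduce to a common denominator, $f_1+f_2 = (p_1 g_2 + p_2 g_1)/(g_1 g_2)$ where $f_i = p_i/g_i$, and to rewrite the divisor as $\Xi(f_1+f_2) = \Xi(p_1 g_2 + p_2 g_1) - \Xi(g_1 g_2)$. The key auxiliary fact is that for any polynomials $P, Q$ on $\mathcal{V}$ one has $h(\Xi(P/Q)) \leq \max(\deg P, \deg Q)$: using $h(\Xi(f)) = \deg \Xi(f)_{\infty}$ (since principal divisors have degree $0$), the poles of $P/Q$ are located either at finite zeros of $Q$ not cancelled by $P$ (contributing at most $\deg Q$) or at infinity when $\deg P > \deg Q$ (contributing at most $\deg P - \deg Q$). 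Combined with $\deg(p_1 g_2 + p_2 g_1) \leq \max(\deg p_1 + \deg g_2, \deg p_2 + \deg g_1)$ and $\deg(g_1 g_2) = \deg g_1 + \deg g_2$, this yields
\[
h(\Xi(f_1+f_2)) \leq \max\bigl(\deg p_1 + \deg g_2,\; \deg p_2 + \deg g_1,\; \deg g_1 + \deg g_2\bigr).
\]

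To match the stated bound, I would then use the identity $h(\Xi(f_i)) = \max(\deg p_i, \deg g_i)$, which holds when $p_i/g_i$ is written in reduced form (after removing common factors, the pole divisor has degree exactly this maximum). This gives $\deg p_i \leq h(\Xi(f_i))$ and $\deg g_i \leq h(\Xi(f_i))$, so each of the three terms in the maximum above is bounded by either $h(\Xi(f_1)) + \deg g_2$ or $h(\Xi(f_2)) + \deg g_1$. The technical difficulty lies in making the intersection-theoretic normalization on $\mathcal{V}$ explicit so that ``polynomial degree'' is compatible with the height; one could either assume reduced form without loss of generality (as any common factor can be cancelled before applying the estimate) or phrase the intermediate bounds with $h(\Xi(g_i))$ in place of $\deg g_i$ and use $h(\Xi(g_i)) \leq \deg g_i$ at the final step.
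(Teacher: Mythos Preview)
Your proposal is correct and follows the same overall strategy as the paper. For items~1 and~3 your arguments are essentially identical to (and somewhat more precise than) the paper's brief remarks: the paper simply observes that the zeros of $f_1f_2$ lie among the zeros of $f_1$ and $f_2$, and declares item~3 ``straightforward''.

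For item~2 both you and the paper reduce to the common denominator $f_1+f_2 = (e_1 g_2 + e_2 g_1)/(g_1 g_2)$, but the paper then counts \emph{zeros} directly rather than switching to poles: the zeros of the quotient are contained in the zeros of the numerator $e_1 g_2 + e_2 g_1$, and the number of zeros of a polynomial on the curve equals its ``weighted degree'' (its pole order at infinity), which for a sum is bounded by the maximum of the summands' weighted degrees. This yields immediately $\max(\deg(e_1 g_2),\deg(e_2 g_1)) = \max(h(\Xi(f_1))+\deg g_2,\, h(\Xi(f_2))+\deg g_1)$ without the third term $\deg g_1 + \deg g_2$ that your pole-counting produces and then has to absorb. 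Your route is slightly longer but equally valid; the technical concern you raise about making ``polynomial degree'' compatible with the height on $\mathcal{V}$ is precisely what the paper hides behind the phrase ``weighted degree'', so you are not missing anything the paper actually supplies.
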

\begin{proof}
Let $f_i= e_i/g_i$ for $i=1,2$ be the two functions written such that $e_i$ and $g_i$ are two polynomials of the function field with no commun factor. Clearly, the function $f_1 f_2$ has~$e_1 e_2$ as numerator, thus
all the zeros of $f_1f_2$ are either a zero or $f_1$ or $f_2$.
  The inequality of $1.$ is strict if some simplifications appear (for instance if a zero of $f_1$ is also a pole of $f_2$).

Writing $f_1+f_2$ as $(e_1g_2+e_2g_1)/(g_1 g_2)$ 
we see that the number of zero of this sum is upper-bounded by the weighted degree of $e_1g_2$ or $e_2g_1$. Note again that some simplifications may appear.

Item $3.$ is straightforward.
\end{proof}

\paragraph{Left side.}Following the ideas of all Frobenius representation algorithms, we
define the factor base such that the images of $A - \alpha B$ in this set 
are small. Doing this improve the relation collection phase compared to a classical sieving.
Indeed, all elements in the left part of the diagram will belong to the factor base.
The relation collection phase produces divisors of the form $\Xi(\Phi^*(i(\prod(A-\alpha B))))$,
so, thanks to the fact that we consider morphisms, it yields a sum of divisors
$\sum (\Xi(\Phi^*(i(A-\alpha B))))$. As explained, we require all the divisors noted by:
\[\Xi(\Phi^*(i(A-\alpha B)))
\]
to be in the factor base.

Let us find the maximal height they can reach.
To do so we set $A$ and $B$ two linear combinations of  monomials in
$\MM_t$ 
and $t$~an integer parameter to define later. 
 All polynomials~$A - \alpha B$ are so linear combinations of monomials in
$\MM_t$ too.
In other words, we are considering divisors of functions of~$\F_q(\E)$ of the form $\Phi^*(\sum_{m \in \MM_t} a_m m) = \sum_{m \in \MM_t} a_m  \Phi^*(m) $ where $a_m$ are constants in the base field. From Proposition~\ref{sayuri} we know that
the height of the divisors we obtain in the left part of the diagram are dominated by the largest height
achieved for any $\Xi(\Phi^*(m))$, with $m \in \MM_t$.

Considering 
$h(\Xi(\Phi^*(U^tV^{t'}))) \leq t \, h(\Xi(\Phi^*(U)))+t' \,h(\Xi(\Phi^*(V)))$, we see that it suffices to determine both
the height of the divisors associated to the images of $U$ and $V$ in the function field of $\E$. The most significant monomials will be $U^tV$ and $UV^t$.

On the one hand we have $\Xi(\Phi^*(V)) =\Xi(X) = ([0,\sqrt{b}, 1]) + ([0,-\sqrt{b}, 1]) - 2 (\OO)$ 
where $\sqrt b$ is the element\footnote{Be careful, here we assume that the characteristic differs from $2$. If not, we just consider the corresponding degree-$2$ place.} in $\bar\F_q$ such that its square is equal to $b$. Hence:
\begin{equation}
\label{vincent}
h(\Xi(\Phi^*(V)))=2.
\end{equation}

On the other hand, $\Xi(\Phi^*(U)) =\Xi((Y+y_1)^2-(X-x_1)^3) - \Xi((X-x_1)^2)$. From 
$\Xi((Y+y_1)^2-(X-x_1)^3) = 2(-P_1)+(Q_1) + (Q_2) -4(\OO)$,
where $Q_1$ and $Q_2$ are two conjugated points of  a degree-2 place, and $ \Xi((X-x_1)^2) = 2(P_1)+2(-P_1)-4(\OO)$, it comes $\Xi(\Phi^*(U)) = (Q_1) + (Q_2) - 2 (P_1)$. We obtain:
 \begin{equation}
\label{ulysse}
h(\Xi(\Phi^*(U)))=2.
\end{equation}

Putting Equations~\eqref{vincent} and~\eqref{ulysse} together with the upper-bound, we conclude that the most significant monomials $U^tV$ and $UV^t$ have both height $2 t +2$. 
Yet $\Phi^*(U^tV)$ and $\Phi^*(UV^t)$ do not share the same denominator so to count their respective contribution in the height of divisors $\sum_{m \in \MM_t} a_m  \Phi^*(m)$ we need to add the contribution of the 
residual denominator. Namely, since $\Phi^*(U^t)$ brings the largest denominator, the height of the divisor of $\Phi^*(U^tV)$ does not change, but for the one of $\Phi^*(UV^t)$ we need to add the number of zeros corresponding to the denominator of $\Phi^*(U^t/U)$. We note that there is $2(t-1)$ such points.
To put it in a nutshell, the most significant monomial is $UV^t$ and \textbf{all divisors on the left are sum of divisors with a height upper-bounded by}~$\textbf{4t}=(2t+2)+2(t-1)$.

To conclude, 
starting the relation collection phase with $t=1$ it is natural to set the initial factor base as
included in the set of divisors of $Div^0$ with height equal or lower than~$4$. We emphasize that in this case,
all the divisors appearing in the left part belong to the factor base.

\begin{table}
\label{table}
\[
\begin{array}{ll}
\hbox{Functions of $\C$} &\hbox{Height of the associated divisors in $\E$}\\
\hline
1&0\\
U,\,V,\,W&2\\
UV, \, VW, \, UW&4\\
U+V, \, V+W, \, U+W&4\\
U^tV, \, UV^t &2t+2\\
U^tV+UV^t & \hbox{at most } 2t+2\\
U^tV^{t+1}W, \, UV^{t+1}W^t, \, U^tV^2W^t, \, U V^{2t}W& 4t+4\\
\end{array}
\]
\caption{Usual functions of $\F_q[\C]$ and their corresponding height in $\E$.}
\end{table}

\paragraph{Right side.}On the right part of the diagram, divisors are given through the extra variable $W$. We can compute the corresponding
height exactly as for $U$. Again it gives:
\begin{equation*}
h(\Xi(\Phi^*(W)))=2.
\end{equation*}
Let us consider the polynomials of $\F_q[U,V,W]$ given on this side and write this time
$\MM_t^{VW}=\left\{ V^i, W^i, V^i\,W , V\,W^i | i \in [0\cdots t]\right\}$. Sorting the monomials in the lexicographic order, we recall that the leading coefficient for both $A$ and $B$ can be chosen equal to $1$. 
 Thus, keeping the leading monomial $U^tV$ apart and calling $a_m$ (resp. $b_m$) the coefficients in $\F_q$
 of $A$ (resp. $B$) we obtain on the right side the polynomial:
\[
\begin{array}{l}
\bracket{A}{B}=A(V,W)B(U,V) -A(U,V)B(V,W)\\
\qquad \quad = \displaystyle \left(V^tW+\sum_{m \in \MM_t^{VW}  \setminus \{V^tW\}} a_m m\right)\left(U^tV+\sum_{m \in \MM_t \setminus \{U^tV\}} b_m m\right)\\
\qquad \qquad \qquad \qquad \qquad \qquad   \displaystyle  - \left(U^tV+\sum_{m \in \MM_t  \setminus \{U^tV\}}a_m m\right) \left(V^tW+\sum_{m \in\MM_t^{VW} \setminus \{V^tW\}} b_m m\right)
\end{array}
\]
Since the monomial $U^tV^{t+1}W$ vanishes it yields a linear combination of monomials 
where the three that dominate the height of the associated divisor are $UV^{t+1}W^t$, $U^tV^2W^t$ and $U V^{2t}W$. Indeed,
each variables $U$, $V$ and $W$ contributes the same way, thus, the most important monomials are those with the
highest additive degree. We note then that we have $h(\Xi(\Phi^*(UV^{t+1}W^t)))=h(\Xi(\Phi^*(U^tV^2W^t)))=h(\Xi(\Phi^*(U V^{2t}W)))=4t+4$. Yet, again, we need to carefully add
the zeros raised by the residual denominator.
 The contribution of $\Phi^*(U^tV^2W^t)$ is left unchanged but we must add the number of poles of $\Phi^*(W^t/W)$ to the height of the divisor associated to $\Phi^*(UV^{t+1}W^t)$ and the one corresponding to the denominator of $\Phi^*((UW)^t(UW)^{-1})$ to the height of the divisor associated to $\Phi^*(U V^{2t}W)$. From Proposition~\ref{sayuri} and since there are respectively $2(t-1)$ and $4(t-1)$ such points, we conclude that 
\textbf{all the divisors appearing on the right side are twice as large as factor base elements since they have height equal or lower than }$\textbf{8t}=4t+4+4(t-1)$. In particular, when $t=1$,
this gives divisors of height $8$ at most.

\paragraph{Complexity of the linear algebra with a naive factor base.}
A first and naive choice of factor base is made of all divisors of
height $\leq 4$.

However, this factor base is too large to be competitive when compared
to the best Frobenius representation algorithms. To show this, let us
briefly analyze the number of operations needed to
perform linear algebra with this factor base. The number of divisors in this naive
factor base is dominated by the number of degree~$4$ places on the
curve $\E$, corresponding to polynomials of degree~$4$ with
coefficients in $\F_q$. So the order of the factor base's size is dominated
by $q^4$.  Considering the Frobenius action of Section~\ref{sec frob}
that permits to divide the size of the factor base by $k\approx q$, we
obtain a factor base of size $O(q^3)$. Since there are $q$ terms in
each linear equations, performing a sparse linear algebra step can be
done in $O((q^3)^2 q)=O(q^7)$ operations.

As a comparison, we that the first phase of the algorithm
in~\cite{DBLP:conf/asiacrypt/JouxP14} only has a $O(q^6)$ complexity.
Thus, we need to improve our initial factor base.

\begin{table}
\label{table2}
\[
\begin{array}{ll}
\hbox{Functions of $\C$} &\hbox{Height of the associated divisors in $\E$}\\
\hline
A(U,V) - \alpha B(U,V) \quad\hbox{ where } \alpha \in \F_q&\hbox{at most } 4t\\
A(V,W)B(U,V) -A(U,V)B(V,W)&\hbox{at most } 8t
\end{array}
\]
\caption{Functions appearing on both side of the diagram, and their corresponding heights in $\E$. $A$ and $B$ are linear combinations of monomials from $\MM_t$.}
\end{table}

\subsection{Action of Frobenius and translation by $P_1$}
\label{sec frob}

Let $d$ be the largest possible height of an elementary divisor of
the factor base. We would like to explicit how the action of Frobenius on elements of the
finite field $\F_{q^k}$ is related to addition of $- P_1$ on the
elliptic curve $\E$. Considering the divisors:
\[\DD_Q= \sum_{i=0}^{d-1}(\Fr^i(Q))-d(\OO) \]
related to the place given by any point $Q \in\E$
and:
\[\DD_{Q-P_1}= \sum_{i=0}^{d-1}(\Fr^i(Q-P_1))-d(\OO)\]
 related to the translation of $Q$ by $-P_1$. We show that the two
 discrete logarithms satisfy a simple relation.  More precisely,
we have the following result:

\begin{lemma} Let $Q\in\E$ be a point with coordinates in $\F_{q^d}$.
We consider the divisors 
\[\DD_Q= \sum_{i=0}^{d-1}(\Fr^i(Q))-d(\OO)\quad \hbox{and}\quad \DD_{Q-P_1}= \sum_{i=0}^{d-1}(\Fr^i(Q-P_1))-d(\OO) \]
respectively related to the place given by the point $Q$ in $\E$ and the one given by the translation of $Q$ by $-P_1$.
Then:
\[
\Psi(\DD_{Q-P_1})=\Fr(\Psi(\DD_Q)) \cdot \textcolor{vertforet}{\Psi((-P_1)-(\OO))^{d N_d}}.
\]
 where $N_d$ is a common multiple of the cardinalities of $\E$
over each finite field $\F_{q^i} \subset \F_{q^d}$
\end{lemma}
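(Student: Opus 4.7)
The plan is to exploit the defining relation $\pi(F)=F+P_1$: under $\Psi$, translation by $P_1$ on $\E$ is converted into Frobenius in the target field. Concretely, I would first decompose $\DD_{Q-P_1}$ using the translation-by-$P_1$ automorphism $t_{P_1}:R\mapsto R+P_1$, whose divisor pullback satisfies $t_{P_1}^{*}(R)=(R-P_1)$. Since $P_1\in\E(\F_q)$, Frobenius commutes with $\pm P_1$, giving $\pi^i(Q-P_1)=\pi^i(Q)-P_1$; bookkeeping the difference between $\DD_{Q-P_1}$ and $t_{P_1}^{*}\DD_Q$ then yields
$$\DD_{Q-P_1}=t_{P_1}^{*}\DD_Q+d\bigl((-P_1)-(\OO)\bigr),$$
the correction arising because $t_{P_1}^{*}$ moves the pole at $(\OO)$ to a pole at $(-P_1)$.

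Next I would compute $\Psi(t_{P_1}^{*}\DD_Q)$. I would pick $f\in\F_q(\E)$ with $\Xi(f)=N_d\DD_Q$; such an $f$ exists with coefficients in $\F_q$ because $\DD_Q$ is a Galois-orbit divisor defined over $\F_q$ and $N_d$ kills its class in $\Pic_0(\E)$. Then $f\circ t_{P_1}$ has divisor $N_d\, t_{P_1}^{*}\DD_Q$, and evaluation at $F$ gives
$$(f\circ t_{P_1})(F)=f(F+P_1)=f(\pi(F))=\pi\bigl(f(F)\bigr),$$
the last equality being the crux and using that the coefficients of $f$ lie in $\F_q$, so Frobenius passes through evaluation. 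Taking the $N_d$-th root, which is admissible since $N_d$ is coprime to the order of $\F_{q^k}^{*}/\F_q^{*}$ and commutes with $\pi$, yields $\Psi(t_{P_1}^{*}\DD_Q)=\pi(\Psi(\DD_Q))$.

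Finally, I would apply the multiplicative map $\Psi$ to the decomposition of the first step: the first summand contributes $\pi(\Psi(\DD_Q))$ by the second step, and the correction term contributes a power of $\Psi((-P_1)-(\OO))$. To match the stated exponent $dN_d$ rather than the naive $d$, one notes that $(-P_1)-(\OO)$ is not principal, so $\Psi((-P_1)-(\OO))$ is defined through the $1/N_d$-th-root convention of the paper with $\Xi(g)=N_d((-P_1)-(\OO))$; under the natural identification $\Psi(D)^n=\Psi(nD)$, the quantity $\Psi((-P_1)-(\OO))^{dN_d}$ equals $g(F)^d$ modulo $\F_q^{*}$, which is precisely the correction factor produced by the decomposition. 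The main obstacle I expect is the bookkeeping in the first step, namely tracking precisely how $t_{P_1}^{*}$ moves the pole at infinity to a pole at $-P_1$ and thereby pinning down the correction term as $d((-P_1)-(\OO))$; the rest is routine manipulation of the $\F_q$-rational function attached to $N_d\DD_Q$ and of the $N_d$-th-root convention defining $\Psi$ on non-principal divisors.
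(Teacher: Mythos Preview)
Your proposal is correct and follows essentially the same approach as the paper. The paper's proof also picks $f_Q\in\F_q(\E)$ with $\Xi(f_Q)=N_d\DD_Q$, sets $g_Q(S)=f_Q(S+P_1)$ (your $f\circ t_{P_1}$), computes $\Xi(g_Q)=\Xi(f_{Q-P_1})-dN_d\bigl((-P_1)-(\OO)\bigr)$, and uses the identical crux $g_Q(F)=f_Q(F+P_1)=f_Q(\pi(F))=\pi(f_Q(F))$ from $f_Q$ having coefficients in $\F_q$; the only cosmetic difference is that you state the divisor decomposition $\DD_{Q-P_1}=t_{P_1}^{*}\DD_Q+d\bigl((-P_1)-(\OO)\bigr)$ up front, whereas the paper derives the equivalent relation at the level of $\Xi(g_Q)$ after multiplying through by $N_d$.
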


\begin{proof}

Let us start from $\DD_Q$ the degree-$d$ divisor. We recall that to have a principal divisor
we need to consider 
$N_d\DD_Q$. Thanks to Miller algorithm we are able to recover a 
function~$f_Q$ with coefficients in the base field $\F_q$ such that $N_d\DD_Q$ is the divisor of this function.
By definition we obtain:
$\Psi(\DD_{Q})= f_Q(F-\OO)^{1/N_d}$. To simplify the notation let us write $\alpha = f_Q(\OO)^{1/N_d}$ that is
an element in $\F_q$. Hence on the one hand we have:
\begin{equation}
\label{alpha}
\begin{array}{lclr}
\Fr(\Psi(\DD_{Q}))&=& \Fr(f_Q(F)^{1/N_d})/\Fr(\alpha)&\\
&=&f_Q(\Fr(F))^{1/N_d}\cdot \alpha^{-1}& \hbox{since $\alpha \in \F_q$.}
\end{array}
\end{equation}
To link this expression to the divisor of $\Fr(Q)$, {\it i.e.} to the evaluation of $f_Q$
in the point $\Fr(F)=F+P_1$ we define the function $g_Q$
such that, for all $S$ in $\E$, $g_Q(S)=f_Q(S+P_1)$. Let us write the divisor of this new function.
Since a zero $S$ (resp. a pole) of $g_Q$ is such that $S+P_1$ is a zero of $ f_Q$ (resp. a pole),
we obtain:
\[
\begin{array}{lclr}
\Xi(g_Q)&=& N_d(\sum_{i=0}^{d-1}(\Fr^i(Q)-P_1)-d(-P_1))&\\
&=& N_d(\sum_{i=0}^{d-1}(\Fr^i(Q-P_1))-d(-P_1))& \hbox{thanks to the fact that $P_1$ lives in $\E/{\F_q}$.}\\
&=& \Xi(f_{Q-P_1})+N_d(d(\OO)-d(-P_1))& \\
&=& \Xi(f_{Q-P_1})- d N_d((-P_1) - (\OO))& \\
\end{array}
\]
Hence on the other hand we have:
\[
\begin{array}{lclr}
\Psi(\DD_{Q-P_1})&=& \Psi( \Xi(f_{Q-P_1}))&\\
&=& \Psi( \Xi(g_{Q}) + dN_d((-P_1) - (\OO)))& \\
&=& \Psi( \Xi(g_{Q}))\cdot\Psi((-P_1)-(\OO))^{d N_d}& \\
&=& g_Q(F)^{1/N_d} \alpha^{-1} \cdot (\alpha /g_Q(\OO)^{1/N_d}) \cdot \Psi((-P_1)-(\OO))^{d N_d}& \\
&\underset{\mod \F_q}{=}& g_Q(F)^{1/N_d}\alpha^{-1} \cdot \Psi((-P_1)-(\OO))^{d N_d}& \\
&\underset{\mod \F_q}{=}& f_Q(F+P_1)^{1/N_d}\alpha^{-1} \cdot \Psi((-P_1)-(\OO))^{d N_d}&\\
&\underset{\mod \F_q}{=}& \Fr(\Psi(\DD_{Q})) \cdot \Psi((-P_1)-(\OO))^{d N_d} \quad \hbox{from Equations \eqref{frobenius} and \eqref{alpha}.}\\
\end{array}
\]

\end{proof}

We emphasize that the green term is a constant term. Thanks to this action, we are able to reduce the size of the factor base 
by a factor~$k$ throughout the computations. Indeed,
if we know the discrete logarithm of $\Psi(\DD_Q)$
then we learn for free the discrete logarithms of~$\Psi(\DD_{Q-P_1}),
\Psi(\DD_{Q-P_2}), \cdots, \Psi(\DD_{Q -P_{k-1}})$.

\subsection{Getting a smaller factor base}

To be able to reduce the size of the initial factor base, and thus to decrease the complexity
of the linear algebra phase, we adapt the idea of systematic factors that was presented 
in~\cite{DBLP:conf/asiacrypt/JouxP14} to the elliptic case.
The idea was twofold: first extracting some
systematic factors that appear in every equation, second, restrict
the search to a sieving space that induce extra common factors. 
In this article, we choose to call these extra factors
{\it compelled factors} to underline the difference with previous ones.

\subsubsection*{Left part of the diagram: making $P_3$ a compelled point.}
In our case, our aim is to consider a subgroup of the sieving space
where $A$ and~$B$ are polynomials such that the associated divisors
always present a common (compelled) point. Here, choose to use the special
point $P_3 =3\,P_1$. As in~\cite{DBLP:conf/asiacrypt/JouxP14} we select three
generators $g_1$, $g_2$, $g_3$ in $\F_q[U,V]$ leading to divisors
going through $P_3$.  We propose to sieve on pairs of polynomials
$(A,B)$ such that $A = g_1 + \alpha g_3$ and
$B= g_1+\beta g_2 + \gamma g_3$ where
$\alpha, \beta, \gamma \in \F_q$.  Indeed, if $A$ (resp. $B$) is a
linear combination of those three generators and if $P_3$ is a zero of
$\Phi^*(i(g_j))$ for $j=1,2$~and~$3$ then it is also a zero of the
image of $A$ (resp. $B$) in $\E$. As a consequence and for the same
reason, $P_3$ is a zero of the image of $A-\alpha B$ too, where
$\alpha$ belongs to the base field.

\begin{lemma}
\label{Lyon}
Let $j$ be an integer in $[0,k-1]$ and assume that $P_0$ is a shorthand for $\OO$. Then:
\[
\begin{array}{rl}
&\Xi(\Phi^*(U-x_j)) = (P_{j+1}) + (-P_{j-1}) - 2(P_1), \\ 
 &\Xi(\Phi^*(V-x_j)) = (P_{j}) + (-P_{j}) - 2(\OO), \\
\hbox{and}  &\Xi(\Phi^*(W-x_j)) = (P_{j-1}) + (-P_{j+1}) - 2(-P_1).
\end{array}
\]
\end{lemma}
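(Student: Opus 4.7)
The statement is purely a computation of divisors of pulled-back functions, and I would organize the proof around three elementary observations.

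First, by the very definition of the map $\Phi : Q \mapsto (x_{Q-P_1}, x_Q, x_{Q+P_1})$, the pullbacks of the coordinate functions are the rational functions on $\E$ given by
\[
\Phi^*(U) : Q \mapsto x_{Q-P_1}, \quad \Phi^*(V) : Q \mapsto x_Q, \quad \Phi^*(W) : Q \mapsto x_{Q+P_1}.
\]
In other words, if $\tau_R$ denotes the translation-by-$R$ automorphism of $\E$, then $\Phi^*(U) = \tau_{-P_1}^*X$, $\Phi^*(V) = X$ and $\Phi^*(W) = \tau_{P_1}^*X$, where $X$ is the abscissa function of the Weierstrass model of $\E$.

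Second, I would invoke the standard fact that on $\E$ the function $X - x_j$ has divisor
\[
\Xi(X - x_j) = (P_j) + (-P_j) - 2(\OO),
\]
since its affine zeros are exactly the (at most two) points whose abscissa equals $x_j$, and its only pole is the point at infinity, with multiplicity $2$. This immediately gives the claimed formula for $\Xi(\Phi^*(V - x_j))$.

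Third, because any translation $\tau_R$ is an automorphism of $\E$, it acts on divisors by $\tau_R^*(P) = (P - R)$, so pulling back the divisor of $X - x_j$ along $\tau_{-P_1}$ shifts each point by $+P_1$ and along $\tau_{P_1}$ shifts each point by $-P_1$. Applying this to the formula above, I get that the zeros of $\Phi^*(U - x_j)$ are the solutions of $Q - P_1 = \pm P_j$, namely $P_{j+1}$ and $-P_{j-1}$, with a pole of order $2$ at $P_1$; and similarly the zeros of $\Phi^*(W - x_j)$ are the solutions of $Q + P_1 = \pm P_j$, namely $P_{j-1}$ and $-P_{j+1}$, with a pole of order $2$ at $-P_1$. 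These are the three displayed identities. The convention $P_0 = \OO$ is only needed to interpret the boundary case $j=1$ (where $P_{j-1}$ appears on the right-hand side) and causes no difficulty.

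The only real obstacle is notational bookkeeping: keeping straight the direction of the translation that each of $U$, $V$, $W$ encodes and the sign conventions $P_1 - P_j = -P_{j-1}$, $-P_j - P_1 = -P_{j+1}$. No genuine computation on the cubic model of $\C$ is needed, since we work through $\Phi^*$ back on $\E$.
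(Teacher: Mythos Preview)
Your argument is correct and follows essentially the same idea as the paper: both identify the zeros of $\Phi^*(U-x_j)$, $\Phi^*(V-x_j)$, $\Phi^*(W-x_j)$ by determining which points $Q$ satisfy $x_{Q\mp P_1}=x_j$ or $x_Q=x_j$. The one difference is in how the poles are obtained: the paper writes out the explicit rational expression $\Phi^*(U-x_j)=\bigl((Y+y_1)/(X-x_1)\bigr)^2-X-x_1-x_j$ and reads off the double pole at $P_1$ directly, whereas you package everything uniformly as $\Phi^*(U)=\tau_{-P_1}^*X$ and transport the known divisor $\Xi(X-x_j)=(P_j)+(-P_j)-2(\OO)$ by translation. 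Your route is slightly cleaner, since it handles zeros and poles in one stroke and avoids any explicit Weierstrass computation; the paper's route has the minor advantage of making visible the concrete formula for $\Phi^*(U)$ that is reused elsewhere.
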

\begin{proof}
Let $j$ be an integer between $0$ and $k-1$. We recall that $x_j$ denotes the abscissa of~$P_j \in \E$ and $x_0=0$.
Then by definition:
\[\Phi(P_j)=[x_{j-1},x_{j},x_{j+1}]\]
 for all possible values of~$j$.
It means that, over the curve~$\C$, $\Phi(P_j)$ is a zero of
$U-x_{j-1}$, $V-x_j$ and $W-x_{j+1}$. Going back to the curve $\E$ it yields 
that the point $P_j$ is a zero
of $\Phi^*(U-x_{j-1})$, $\Phi^*(V-x_{j})$ and $\Phi^*(W-x_{j+1})$. 
Similarly, from: \[\Phi(-P_j)=[x_{j+1},x_{j},x_{j-1}]\] we get that $-P_j$ is a zero of 
$\Phi^*(U-x_{j+1})$, $\Phi^*(V-x_{j})$ and $\Phi^*(W-x_{j-1})$. 
Besides, writing $\Phi^*(U-x_{j}) $ as $ \left((Y+y_1)/(X-x_1)\right)^2-X-x_1 -x_j$ 
we see that $P_1$ is a pole of $\Phi^*(U-x_{j}) $ with multiplicity~$2$, and similarly $-P_1$
is a pole of $W-x_{j}$ with multiplicity~$2$. 
From $\Phi^*(V-x_{j})= X-x_j$ we conclude that $\OO$ is twice a pole too.
\end{proof}

Hence as generators we select:
\begin{equation}
\label{tiago}
\begin{array}{rl}
&g_1 = U-x_2, \\ &g_2 = V-x_3, \\ \hbox{and}  &g_3 = (U-x_2)(V-x_3).
\end{array}
\end{equation}
From Lemma~\ref{Lyon} we have $\Xi(\Phi^*(g_1))) = (P_3) + (-P_1) - 2(P_1)$ and $\Xi(\Phi^*(g_2))) = (P_3) + (-P_3) - 2(\OO)$.
Thus $\Xi(\Phi^*(g_3))) = 2(P_3) + (-P_1)+(-P_3) - 2(\OO) - 2(P_1)$. Clearly the point~$P_3$ is a positive point of each divisor.

Thus, if we start to sieve with a parameter $t$ equal to $1$, we obtain divisors on the 
left part of the diagram that have a height  lower or equal to $4$. Since $P_3$ is a positive point for all these
divisors, we are left with divisors that have a height lower or equal to $3$. We conclude with the definition of
the reduced factor base:
\[\FF = \left\{ d \in Div(\E) \quad  |\quad \hbox{$d$ is elementary and } h(d) \leq 3 \right\}.\]

\begin{remark}
The factor base is the same on both sides of the diagram.
\end{remark}

As previously, we can upper-bound the cardinality of the factor base by the number
of divisors with height lower than $3$, so the number of monic degree-$3$ polynomials in $\F_q$, which is
$q^3$.
Thanks to the Frobenius action, the base is reduces by a factor of
$k$, and the reduced factor base has size $O(q^3/k)$.
At the end, assuming that we get enough equations (we discuss this issue
 in Section~\ref{enough}), 
linear algebra recovers the discrete logarithms of the initial factor
base elements in $O((q^3/k)^2q)=O(q^7/k^2)$ operations. When $k$ is
close to $q$, this matches the result of~\cite{DBLP:conf/asiacrypt/JouxP14}.

\begin{remark}

Note that by contrast with polynomials, ideals of degree $4t$ are
determined from $4t$ monomials, instead of $4t+1$. This is unfortunate
because it forces us to increase the degrees to get enough degrees of
freedom when generating relations. However, this drawback is
counter-balanced by the reduction of the factor base size obtained by using the
action of Frobenius.

\end{remark}

\subsection{How to get enough relations with the reduced factor base}
\label{enough}
Thank to $P_3$ which is a compelled point we know that, on the left
part of the diagram, we directly have divisors that split into factor
base elements only. Now the questions are whereas we manage to easily
obtain small height divisors on the right part or not, and how many
relation we expect to write. We recall that, to be able to perform
linear algebra, we need as many relations as unknowns.

We have around $q^3/3$
unknowns and we sieve on $q^3$ pairs of polynomials $(A,B)$ consisting
in linear combinations of $g_1, g_2$ and $g_3$ as given in~\eqref{tiago}.
It means that we need a probability higher than $1/3$ to get a relation.
Since a degree-$d$ divisor is clearly linked with an irreducible polynomial
of the same degree as seen in~Section~\ref{preliminaries}, this probability
is assumed to be the same that a random polynomial of degree $d$
to factor into terms of degree at most~$3$.

For degree $d=8$, the probability is easy to compute. A polynomial
fail to factor into terms of degree at most~$3$ when one factor as
degree $4$ or more. Since there can only be a single factor when the
degree is $5$, $6$, $7$ or $8$ and at most two of degree $4$. Thus,
for large fields, the probability of success approaches
$1-(1/8+1/7+1/6+1/5+3/16+1/32)\approx 0.147.$
Unfortunately, this is much smaller than $1/3$.

\subsubsection*{Right part of the diagram: the two compelled points $P_2$ and $P_3$.}
Thus we need to look at the right part of the diagram more carefully. 
Going back to the analysis made in Section~\ref{naive}, \textit{Right part}
we see that choosing $A$ and $B$ as monic (in some sense) does not reduce 
the height of the associated divisor. Hence, for $(A,B)$ a pair of linear combinations
of $g_1, g_2$ and $g_3$ as in~\eqref{tiago}, the divisor:
\[\Xi(\Phi^*(\bracket{A}{B}))\]
 has still a height of $8$. Because $P_3$
is a zero of $\Phi^*(A(U,V))$ and $\Phi^*(B(U,V))$,
we note that $P_3$ is a zero of $\Phi^*(\bracket{A}{B})$ too.
Yet it is not enough and we need to extract another compelled positive point
of the associated divisor to the image of $A(V,W)$ (resp. $B(V,W)$) over $\E$. 
We start by underlining that $g_1, g_2$ and $g_3$ respectively becomes
$V-x_2, W-x_3$ and $(V-x_2)(W-x_3)$, when sending $V$ to $W$ and $U$ to $V$. Thus according to Lemma~\ref{Lyon}, the point 
$P_2=2\,P_2$ is a zero of all
the generators, and so a zero of $\bracket{A}{B}$ as $P_3$.
We conclude that we are left with a divisor of height at most~$6$. The probability that it
splits into a sum a divisors with height at most~$3$ is so roughly equals to:
\[1-(1/6+1/5+1/4) \approx 0.383 >1/3,
\]
as $q$ grows.
As a consequence, we heuristically expect to get a linear system with enough 
equations to get the discrete logarithms of all elements of $\FF$ in $O(q^5)$ operations.

 \noindent \fcolorbox{bleucecile}{white}{\parbox{\linewidth \fboxrule \fboxsep}{
\begin{heuristic}
The heuristic in this linear algebra step and in all the following ones comes from the
fact that we have no argument to prove we really get enough equations. We can count
them and expect that when their number slightly exceeds the number of unknows,
we are able to find a solution. Yet,
nothing provably indicates whether the kernel of our matrix of relations has dimension~$1$ or not.
\end{heuristic}}}

\section{Extended Factor Base and Individual Discrete Logarithm}
\label{extended}
We only sketch here the last two main steps of our practical
algorithm, the computation of an Extended Factor Base and the
Individual Discrete Logarithm step.  Indeed, they are
 an adaptation of the techniques that already exist
 for Frobenius representation algorithms to our setting.
 
\subsection{From divisors of height $3$ to divisors of height $4$}
As done in Frobenius representation algorithms, we extend the initial
factor base and now include all elementary divisors up to height
$4$. Thanks to the Frobenius action, there are approximately
$q^4/4k\approx q^3/4$
unknowns.  The naive approach we showed earlier gives the
desired logarithms at a cost of $O(q^7)$ arithmetic operations (or
$O(q^9/k^2)$ when $k$ is away from $q$).

\subsubsection*{Practical speed up with regrouping}
To speed up the computation of height $4$ divisors, it is possible to
decompose the height~$4$ factor base into small groups, in a way
similar to~\cite{DBLP:conf/asiacrypt/JouxP14}, in order to perform
several linear algebra steps on these small groups, instead of a
single big linear algebra step. In Appendix~\ref{bunches}, we give
details on how to produce relation in these groups. One technicality
is the interaction of the groupings with the reduction of the factor
base size given the action of Frobenius.  

Once the height $4$ divisors are obtained, it is a simple matter to
continue extending the factor base to height $5$ divisors. For that
final step, no additional linear algebra is needed. It suffices to
keep relations where a single height $5$ divisors appears, the rest
being of lower degree. See Appendix~\ref{bunches} for a detailed
explanation.

However, for the height $4$ extension, the expected number of produced
relations seems to be slightly too low asymptotically to guarantee its
success. Nevertheless, we tested the method on a practical example to
check its viability. Namely, for the target finite field
$\F_{3^{1345}}$, we were able to compute logarithms for an extended
factor base comprising divisors of height up to $5$.  This was done by
choosing a curve of (prime) cardinality $269$ over $\F_{243}$.
Studying the exact behavior of the height $4$ extension to understand
this gap is thus a matter of further research.

\subsection{Computing Individual Discrete Logarithms}
To solve the discrete logarithm problem in our target finite field, we
need not only to know the logarithms of extended factor base elements
but to be able to compute the discrete logarithm of any arbitrary
element. This is the aim of this paragraph. Various descent phases
were previously proposed by various authors, the idea is to show how
to adapt to our context. In practice, one can use the bilinear
descent, the classical descent and the zig-zag descent or a mix of
them. Indeed the quasi-polynomial descent
of~\cite{DBLP:conf/eurocrypt/BarbulescuGJT14} is unlikely to be
practical for currently accessible computations.

For the classical descent which simply consist in writing the target
finite field element whose discrete logarithm is wanted as a product
or quotient of relatively low-degree polynomials in $\theta$, no
adaptation is needed. We only need to check that any polynomial in
$f(\theta)$ can be injected in the commutative diagram. This is simply
done by written the divisor of $f(V)$ since $\Psi(V)=\theta$. When $f$
is irreducible, the corresponding divisor is either the sum of two
elementary divisors of height $\deg(f)$ or a single elementary divisor
of height $2\deg(f)$.

We illustrate the adaptation with the bilinear and zigzag descents:

\paragraph{Bilinear descent for our setting.}
The bilinear descent step is easy to adapt
from~\cite{DBLP:conf/asiacrypt/JouxP14}.  Remark that we usually need
to unbalance the degrees of freedom in $A$ and $B$, thus
choosing different sets of generating polynomials. Instead of
constructing the polynomials just from $1$, $U$, $V$ and $UV$ we built
them from higher degree polynomials in $\MM_{t_a}$ and
$\MM_{t_b}$ respectively. We assume that $t_a\geq t_b$.  Let us first
analyze the case where we use all these monomials, remembering that
there are $4t_a$ and $4t_b$ of them. As usual, we force
$A$ and $B$ to be monic and remove the head monomial of $B$ from
$A$. All of the other coefficients are replaced by a corresponding
formal unknown. Thus, the polynomial $A$ contains $4t_a-2$ unknowns. If
$t_a\neq t_b$, $B$ contains $4t_b-1$ monomials. If $t_a = t_b$, we can remove an
extra unknown from $B$. Furthermore, we know that the height of
factors of the form $A-\alpha B$ is upper bounded by $4t_a$.
We also know from Table~\ref{table2} that the
height of the bracket is at most $8t_a$

If we want to adjust the values modulo $4$ of the number of degrees of
freedom, it is necessary to use compelled points. More precisely, we
can force $A$ and $B$ to go through one, two or three forced
points. This reduces the degrees of freedom by the same amount on both
sides. It also reduces height on the left by the same value and
heights on the right by its double.

As in Frobenius representation algorithms, the coefficients of each
monomials in $\bracket{A}{B}$ are bilinear (or linear or constant) in the $A$ and $B$ unknowns.
To force an elementary divisor of degree $d$ to appear in $\bracket{A}{B}$, it suffices to
require that the bracket vanishes when evaluated at each of the $d$
conjugate points corresponding to the associated prime divisor.
This yields a bilinear system of $d$ equations in the $A$ and $B$
unknowns. This equation can be solved using Gröbner basis techniques
exactly as in the case of  Frobenius representation.

The only extra (and minor) restriction here is the relation between the number of
$A$ variables and $B$ variables modulo $4$.

\subsection{Zig-zag descent.}
The zig-zag descent seems to be the best option to achieve
provable quasi-polynomial complexity. In particular, it is used both
in ~\cite{KW19} and~\cite{GuidoLido2}. 
As a consequence, we also show how to adapt it to our
setting. As it is more lenghty to describe than the bilinear descent,
we assign a separate section to it.

\subsubsection*{Short recap on the zig-zag descent.}
\label{zigzag}
First the main idea is to adapt the zig-zag descent presented in~\cite{DBLP:journals/iacr/GrangerKZ14a} to our setting. Let us give an insight of this descent in the classical settings. We call $z$ our target, which is an irreducible polynomial  in $\F_q[X]$ of degree \footnote{Indeed, one can use  Wan's theorem~\cite[Theo 5.1]{Wan97generatorsand} to ensure that any field element is equivalent to an irreducible polynomial of degree a power of~$2$ only slightly larger than the extension degree~$k$.} $2d=2^{t+1}$.
One crucial point of this method is that for any relation in $\F_{q^2}[X]$ implying
degree-$d$ polynomials, one can find a relation in the subfield  $\F_q[X]$ at the price of having polynomials of
degree twice as large. Thus,  in order to make $z$ appear in a polynomial relation of $\F_q[X]$, we write it as a product of two degree-$d$ conjugated polynomials $\tilde z$ and ${\tilde z}^*$ 
over the extended field~$\F_{q^2}[X]$ and we try to get one relation (in the extended field) involving one of this degree-$d$ polynomials.  Multiplying by the 
same conjugated relation we would obtain a relation (in the subfield) where $z$ appears.

Recursively manipulating this trick on a tower of extensions as presented 
in Figure~\ref{toureiffel}, we write in fact $z$ as a product of conjugated degree-$2$ 
polynomials over $\F_{q^{2^t}}[X]$. 
 Indeed, this descent method rests upon the existence of an extended 
field in which any degree-$2$ polynomials evaluated in $\theta$ can be written 
as product of linear evaluations in $\theta$. Thus at the end, 
we get a relation of the form $z(\theta) = \prod_i L_i(\theta)$ where $L_i$ are linear polynomials.

\begin{figure}[h]
\centering
\begin{tikzpicture}[commutative diagrams/every diagram]
  \node (P0) at (0cm,0cm) {$\begin{array}{c}
\F_{q^{2^t}} \left[ X\right] \\
\prod z_i, \hbox{ where $z_i$ are degree-$2$ polynomials}
\end{array}$};
  \node (P1) at (0cm,-2cm){$\begin{array}{c}
\F_{q^{2^{t-1}}} \left[ X\right]\\
\prod z_i, \hbox{ where $z_i$ are degree-$4$ polynomials}
\end{array}$};
  \node (P2) at (0cm,-3.5cm){$\vdots$};
  \node (P3) at (0cm,-5cm){$\begin{array}{c}
\F_{q^{4}}  \left[ X\right] \\
\prod z_i, \hbox{ where $z_i$ are degree-$2^{t-1}$ polynomials}
\end{array}$};
  \node (P4) at (0cm,-7cm){$\begin{array}{c}
\F_{q^{2}} \left[ X\right] \\
 \tilde z \cdot \tilde z^*, \hbox{ where $\tilde z$ and $\tilde z^*$ are degree-$2^{t}$ polynomials}
\end{array}$};
  \node (P5) at (0cm,-9cm){$\begin{array}{c}
\F_{q}  \left[ X\right]  \\
z \hbox{ a degree-$2^{t+1}$ polynomial}
\end{array}$};
  \path[commutative diagrams/.cd, every arrow, every label]
    (P1) edge node {} (P0)
    (P2) edge node {} (P1)
    (P3) edge node {} (P2)
    (P4) edge node {} (P3)
    (P5) edge node {} (P4);

\end{tikzpicture}
\caption{\label{toureiffel} Tower of extensions over the base field $\F_q$ in the classical zig-zag descent.}
\end{figure}

To adapt this descent to our settings, a idealized method would be  to 
exhibit a sufficiently large extension of the curve $\C$ in which any
height-$2$ divisor can be written not as a degree-$2$ place but as a
sum of points on this exact extension (and not the larger following
one). This precisely would have translated the requirement that all
degree-$2$ polynomials split in linear polynomials when the extension
degree of the field is sufficiently large. Unfortunately, to the best
of our knowledge,  this ideal adaptation isn't possible.

On the technical side, we see that the method is much easier to
describe when computing logarithm in $\F_{q^k}$ for an odd extension
degree $k$. Indeed, in that case, the compositum of $\F_{q^k}$ and any
extension $\F_{q^{2^i}}$ is simply $\F_{q^{k2^i}}$. Making this
assumption is very convenient to describe the adaptation to the
elliptic representation.

\subsubsection{Elliptic zig-zag descent.}
We now go back to the elliptic representation setting, with the
additional restriction that the extension degree $k$ is odd.

\paragraph{Points and divisors over extensions.}
As mentioned in the last paragraph of Section~\ref{representation},
the commutative diagram in Figure~\ref{fig:EllipticCD} can be used not
only over $\F_{q}$ but also over extensions. We now give more details
for $\F_{q^d}$, assuming that $d$ and $k$ are coprime.

This we now use polynomials $A$ and $B$ with coefficients
in the larger field $\F_{q^d}$. Everything remains almost identical, except the
definition and properties of the bracket. With a larger field, we use:
$$
\bracket{A}{B}_*= A^{\pi}(V,W)\,B(U,V)-A(U,V)\,B^{\pi}(V,W),
$$
where $A^{\pi}$ denotes the polynomial derived from $A$ by raising each
coefficient of $A$  to the power $q$  (while keeping the same
monomials).
This new bracket $ \bracket{\cdot}{\cdot}_*$ is $\F_q$-bilinear (but
not $\F_{q^d}$-bilinear).

\paragraph{Bootstrapping the descent.}
\label{chemin}
Let $z\in \F_{q^k}$ be our target arbitrary element for which we want to find a discrete logarithm.
Thanks to the diagram of Figure~\ref{fig:EllipticCD}, we know that there exists a polynomial $\Pol$ in $\F_q[U,V]$ such that:
\[z=\Psi(\Xi(\Phi^*(\Pol(U,V)))).
\]
In fact, there are many such polynomials. We choose $\ell$ such that
$2^\ell>k$ and search a representation by a polynomial $\Pol$  in $\F_q[U,V]$ such that:
 \begin{enumerate}
\item $z=\Psi(\Xi(\Phi^*(\Pol(U,V)))).$
\item $h(\Xi(\Phi^*(\Pol(U,V)))) = 2^\ell.$
\item $\Xi(\Phi^*(\Pol(U,V)))$ exactly corresponds to a place of
  degree $2^\ell.$
\end{enumerate}
 
 Let us call $p_z$ such a place in  $\Sigma_{\F_{q}(\C)}$. We could
 lift it to   $\Sigma_{\F_{q^{2^\ell}}(\C)}$ so that it corresponds to
 $2^\ell$~points. However, for the rest of the method, it suffices to
 decompose it into degree-$8$ places. Theses places appear in $\Sigma_{\F_{q^{2^{\ell-3}}}(\C)}$.

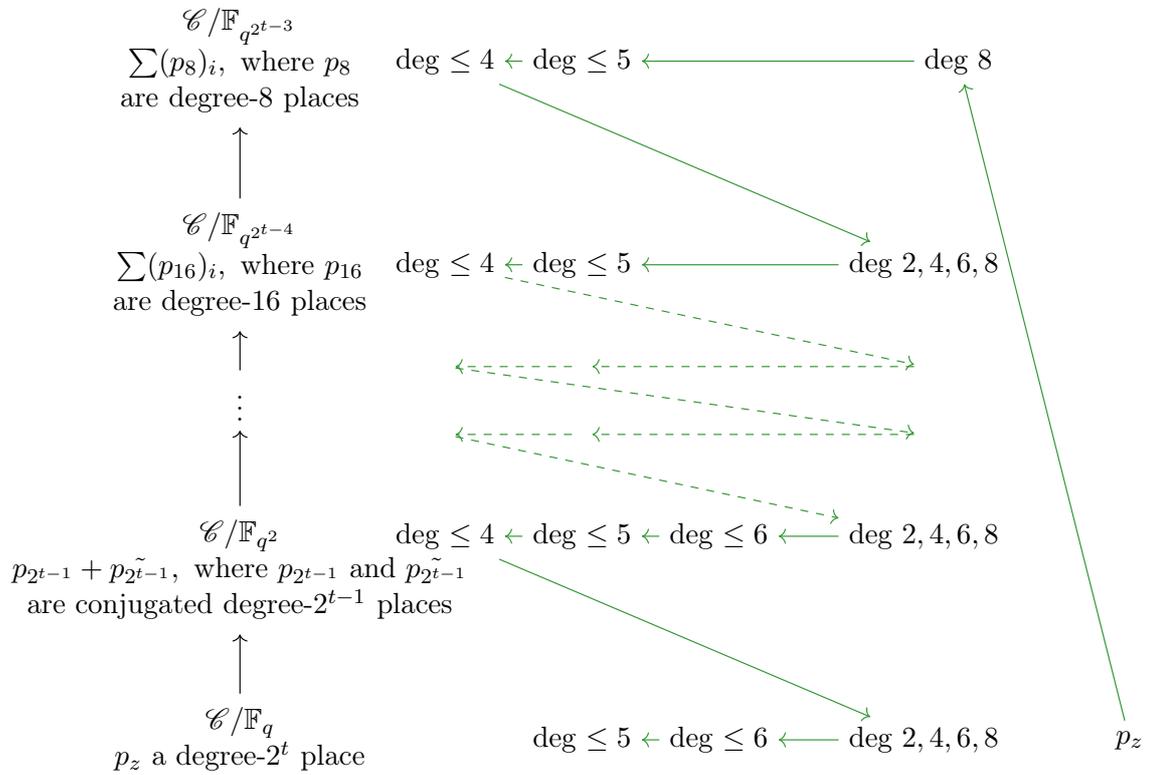
\begin{figure}[p]
\centering
\begin{tikzpicture}[commutative diagrams/every diagram,scale=0.9]
  \node (P1bis) at (0cm,-6cm){$\begin{array}{c}
\C/\F_{q^{2^{t-3}}}\\
\sum (p_8)_i, 
\hbox{ where $p_8$}
\\ \hbox{are degree-$8$ places}
\end{array}$};
  \node (P2) at (0cm,-9cm){$\begin{array}{c}
\C/\F_{q^{2^{t-4}}}\\
\sum (p_{16})_i, \hbox{ where $p_{16}$}\\
 \hbox{are degree-$16$ places}
\end{array}$};
  \node (P3) at (0cm,-11cm){$\vdots$};
  \node (P4) at (0cm,-13.5cm){$\begin{array}{c}
\C/\F_{q^{2}} \\
p_{2^{t-1}} + \tilde{p_{2^{t-1}}}, \hbox{ where $p_{2^{t-1}}$ and  $\tilde{p_{2^{t-1}}}$}
\\ \hbox{are conjugated degree-$2^{t-1}$ places}
\end{array}$};
  \node (P5) at (0cm,-16cm){$\begin{array}{c}
\C/\F_{q}  \\
p_z \hbox{ a degree-$2^{t}$ place}
\end{array}$};
\node (Q1) at (13cm,-16cm){$p_z$};
\node (Q2) at (10.5cm,-6cm){deg $8$};
%\node (Q3) at (7cm,-6cm){deg $\leq 6$};
\node (Q4) at (5cm,-6cm){deg $\leq 5$};
\node (Q5) at (3cm,-6cm){deg $\leq 4$};
\node (Q6) at (10cm,-9cm){deg $2,4,6,8$};
%\node (Q7) at (7cm,-9cm){deg $\leq 6$};
\node (Q8) at (5cm,-9cm){deg $\leq 5$};
\node (Q9) at (3cm,-9cm){deg $\leq 4$};
\node (Q10a) at (10cm, -10.5cm){};
\node (Q10b) at (7cm, -10.5cm){};
\node (Q10c) at (5cm, -10.5cm){};
\node (Q10d) at (3cm, -10.5cm){};
\node (Q10e) at (10cm, -11.5cm){};
\node (Q10f) at (7cm, -11.5cm){};
\node (Q10g) at (5cm, -11.5cm){};
\node (Q10h) at (3cm, -11.5cm){};
\node (Q11) at (10cm,-13cm){deg $2,4,6,8$};
\node (Q12) at (7cm,-13cm){deg $\leq 6$};
\node (Q13) at (5cm,-13cm){deg $\leq 5$};
\node (Q14) at (3cm,-13cm){deg $\leq 4$};
\node (Q15) at (10cm,-16cm){deg $2,4,6,8$};
\node (Q16) at (7cm,-16cm){deg $\leq 6$};
\node (Q17) at (5cm,-16cm){deg $\leq 5$};
%\node (Q18) at (3cm,-16cm){deg $\leq 4$};

  \path[commutative diagrams/.cd, every arrow, every label]
(P2) edge node {} (P1bis)
    (P3) edge node {} (P2)
    (P4) edge node {} (P3)
    (P5) edge node {} (P4);
  \path[commutative diagrams/.cd, every arrow, every label]
 %   (P1) edge node {} (P0)
    %(P1bis) edge node {} (P1)
    [color=vertforet]
   (Q1) edge node {} (Q2)
    (Q2) edge node {} (Q4) 
%    (Q2) edge node {} (Q3) 
%   (Q3) edge node {} (Q4)
    (Q4) edge node {} (Q5)
    (Q5) edge node {} (Q6)   
     (Q6) edge node {} (Q8) 
%     (Q6) edge node {} (Q7) 
%    (Q7) edge node {} (Q8)
    (Q8) edge node {} (Q9)
(Q11) edge node {} (Q12) 
(Q12) edge node {} (Q13) 
(Q13) edge node {} (Q14) 
    (Q14) edge node {} (Q15) 
    (Q15) edge node {} (Q16) 
    (Q16) edge node {} (Q17) 
 %   (Q17) edge node {} (Q18) 
    ;
  \path[commutative diagrams/.cd, every arrow, every label]
  [color=vertforet,dashed](Q9) edge node {} (Q10a)
  (Q10a) edge node {} (Q10c) 
%  (Q10a) edge node {} (Q10b) 
%  (Q10b) edge node {} (Q10c) 
  (Q10c) edge node {} (Q10d)
 (Q10d) edge node {} (Q10e) 
 (Q10e) edge node {} (Q10g) 
% (Q10e) edge node {} (Q10f) 
%  (Q10f) edge node {} (Q10g)
  (Q10g) edge node {} (Q10h)
  (Q10h) edge node {} (Q11) ;
  
\end{tikzpicture}
\caption{\label{pearltower} Tower of elliptic curves extensions in the elliptic zig-zag descent. The path in green represents how we decompose $p_z$ in smaller degree places over higher extensions during the algorithm.}
\end{figure}

\paragraph{Descending degree-$8$ places.}
Using a series of relations based on the bracket $\bracket{A}{B}_*$,
there is a way to express the logarithm of the divisor corresponding
to a degree-$8$ place as a sum of logarithms of divisors of degree at
most $4$. Once this is done, we can pair conjugates divisors and go
one step down in the tower of quadratic extension. This at most doubles
the height of divisors. Iterating the process, we now encounter places
of degree $6$ and $8$ whose divisors need to be expressed as
combination of divisors of degree at most $4$. Finally, at the bottom
of the tower, everything can be expressed using divisors of height at
most $4$, this in turn permit to compute the logarithm of $z$.

 Keeping this strategy in mind we now describe the transformation of
 logarithm of divisors into sums of divisors of lower height. More
 precisely, we first transform degree $8$ places as sums using divisors of
 height at most $6$. Places of degree $6$ can be expressed using
 divisors of height up to $5$. Finally place of degree $5$ are
 transformed using divisors of height up to $4$.

 The exact degrees appearing in the descent strategy depend on the
 relative position in the tower of extension. Except at the lower
 levels, it is possible to descent directly from degree $8$ to degree
 $5$ and from degree $6$ to degree $4$. Except at the lower levels, it
 even possible to descend from degree $4$ to degree $3$.

 Thus, from a practical point of view,there are two essentially
 equivalent options for the descent. Either one starts from a
 degree-$8$ place and encounters descent steps from $8$ to $5$ then
 $4$ and descent steps from $6$ to $4$, except in the lower levels
 where longer chains from $8$ to $6$ then $5$ and finally $4$
 appear. Or one starts from a degree~$4$ place and encounters descent
 steps from $6$ to $4$ then $3$ and steps from $4$ to $3$. At the
 lower levels, this approach gets stuck.

In our presentation, we choose the approach that starts from a
degree-$8$ place. Note than in the context of provable algorithms,
using degree~$8$ possibly leads to a more difficult proof.

 \paragraph{Degree-$5$ places.}
We start with degree $5$ places since it is slightly simpler and
illustrates the general idea. We let $d$ be the power of $2$
corresponding to our current position in the tower of extensions.

Again, we create somehow relations from:
\[\prod_{\alpha\in\PP_1(\F_q)}(A-\alpha\,B) =\bracket{A}{B}_*  ,
\] where $(A,B)$ is a pair of polynomials with coefficients in
$\F_{q^d}$. For degree $5$, the polynomials are built from
monomials in $\MM_1$, i.e. from $1$, $U$, $V$ and $UV$.
To check whether there are enough degrees of freedom to force a place
too appear, we need to consider how many (non equivalent) pairs of
candidates relations we can try. 

Since we use the new bracket instead of ~$\bracket{\cdot}{\cdot}$, the
counting changes slightly. Before considering the property of the
bracket, there is a total of $8$ coefficients in $\F_{q^d}$, four in
each of $A$ and $B$.  Remark that, for any $\Lambda \in \F_{q^d}$, we
have
$\bracket{\Lambda A}{\Lambda B}_* = \Lambda^{q+1} \bracket{A}{
  B}_*$. Simultaneously, the left side corresponding to
$(\Lambda A, \Lambda B)$ is
$ \Lambda^{q+1} \prod_{\alpha \in \PP_1(\F_q)} (A-\alpha B)$.  Since
$ \Lambda^{q+1} $ appears on both sides, we see that $(A,B)$ and
$(\Lambda A, \Lambda B)$ generate the same equation.  Thus, we can set
the leading coefficient of $A$ to $1$.  This removes one of the
coefficients.

In addition, because of the $\F_q$-linearity of the bracket, for any
$\lambda \in \F_{q}$, we have
$\bracket{A}{B-\lambda A}_* = \bracket{A}{B}_*.$ Thus we can fix one
component of the leading coefficient of $B$ to $0$. Then, using
$\bracket{A}{\lambda B}_* = \lambda \bracket{A}{B}_*$ we can fix one
component of another coefficient of $B$ to $1$. Finally, thanks to the
relation $\bracket{A-\lambda B}{B}_* = \bracket{A}{B}_*,$ we can set
the corresponding component in $A$ to $0$.

This decreases the numbers of degrees of freedom to $7-3/d>5,$ when
$d>1$. Thus, we have enough degrees of freedom available. In the case
$d=1$, we are in the base field where the logarithms of the degree-5
places have already been precomputed as part of the extended factor
base.

Let $p_5$ be a place of degree $5$ in $\Sigma_{\F_{q^d}(\C)}$. Using
a variation on bilinear descent and solving a bilinear system of
equations in $6$ unknowns over the extension field $\F_{q^d}$, we can
obtained a relation involving $p_5$. Since the number of variables is
a small constant, this is a very efficient computation.

The relation can be written in the form:
\begin{equation*}
 \sum_{D \in Div \,| \, h(d) \leq 4} D = (p_5) + D_3
\end{equation*} where $D_3$ is a divisor of height $3$. This shows that we can descend 
any divisors of height $5$ as a sum of divisors of height at most~$4$. Note that $D_3$ and the divisors on the left may not be
elementary, however, in that case they decompose into elementary
divisors of lower degrees.

Note that we do not prove here that such a decomposition always
exists. Instead out counting of the degrees of freedom gives heuristic
support to this fact. It might be possible to adapt the proofs 
of~\cite{GuidoLido2} or~\cite{KW19} to our specific setting.

\paragraph{Degree-$6$ places.}
For degree $6$, there are two options depending of the extension
degree~$d$.

When $d\geq 4$, we can again build relations using only the
monomials $1$, $U$, $V$ and $UV$. In this case, it gives $7-3/d>6$
degrees of freedom. Thus, we can directly descend  to a sum of
divisors of height at most~$4$.

For the remaining cases, $d=2$ or $d=1$, we need to use monomials
from $\MM_2$ to provide more degrees of freedom. However, if we use
them all, the height of the left-hand factor become $8$ and the height
of the bracket is $16$. To control this explosion, it suffices to fix
three (essentially arbitrary) compelled points and keep a basis of all
functions going through these $3$ points. This basis contains $5$
polynomials, say $g_1$, \dots, $g_5$. Forming $A$ and $B$ as linear
combinations of the $g_i$s induces a systematic factor of total height
$3$ in every term $A-\alpha B$ (corresponding to the compelled
points). Furthermore, this systematic factor also appears in the
decomposition of the bracket together with an extra systematic factor
also of height $3$. This extra factor corresponds to the compelled
points translated by $-P_1$. Thanks to the systematic factors, the
height of the left becomes $5$ while the height of the right becomes
$10$. There is a total of $10$ coefficients in $A$ and $B$, which
corresponds to $9-3/d\geq 6$ degrees of freedom when removing
identical relations as in the previous case. More precisely, we can
fix the coefficient of $g_1$ in $A$ to $1$, one component of the
coefficient of $g_1$ in $B$ to $0$, and one component of $g_2$ to $1$
in $B$ and $0$ in $A$.

Solving a bilinear system, we can find coefficients that lead to an
equation:
\begin{equation*}
 \sum_{D \in Div \,| \, h(d) \leq 5} D = (p_6) + D_4
\end{equation*} where $D_4$ is a divisor of height $4$. This expresses
the logarithm $p_6$ as a sum of logarithms of divisors of height at
most $5$.

\paragraph{Degree-$8$ places.}
For degree $8$ places, we proceed as in the second method for degree
$6$. We use monomials from $\MM_2$. With three compelled points as in
degree $6$, we have $9-3/d$ degree of freedom. This is more than $8$
as soon as $d\geq 4$. In this case, we can write the logarithm $p_8$
as a sum of logarithms of divisors of height at most $5$.

When $d$ is $1$ or $2$, we use only two compelled points. We thus have
a basis of $6$ polynomials and $9-3/d\geq 8$ degrees of freedom. The
height after removing the systematic factors become $6$ for the left
factors and $12$ for the bracket. Thus, in the lower levels of the
tower of extension, we can write the logarithm $p_8$ as a sum of
logarithms of divisors of height at most $6$.

\paragraph{Practical (un)efficiency of the approach.}
In the Frobenius representation zig-zag, every step down the tower was
based on the creation of one relation. As a consequence, at every
level, the total number of elements under consideration was multiplied
by $O(q)$.

By contrast, here, we need two levels of relations for each of the
middle levels of the tower. As a consequence, the total number at each level is
multiplied by $O(q^2)$, which makes this approach much less appealing in
practice.

\ifanonymous
\else
\section*{Acknowledgments}
This work has been supported by the European Union's H2020
Programme under grant agreement number ERC-669891.
\fi

\bibliographystyle{alpha}
\bibliography{biblio}

\newpage
\appendix

\section{Details on the curve model}
\label{appendix}
In order to understand our model of $\C$, we analyze how points of
$\E$ are mapped by $\Phi$  to the model $\C$ on the three variables
$U$, $V$ and~$W$.  We assume for simplicity that $\E$ is given by a
reduced Weirstrass equation, but this can be generalized to include
characteristic $2$ and $3$.

\paragraph{Equations of $\C$.}
We assume that $\E$ is given by a reduced Weierstrass equation:
$$
\E: Y^2=X^3+a\,X+b.
$$
In this case, the third summation polynomial is given by:
$$
S_3(X_1,X_2,X_3)=4\sigma_1(\sigma_3+b)-(\sigma_2-a)^2,
$$
where the $\sigma_i$ are the symmetric polynomials:
\begin{eqnarray*}
\sigma_1& =& X_1+X_2+X_3,\\
\sigma_2& =& X_1\,X_2+X_1\,X_3+X_2\,X_3\quad \mbox{and}\\
\sigma_3& =& X_1\,X_2\,X_3.
\end{eqnarray*}

Let us first consider the variety given by the equations $S_3(U,V,x_1)=0,$
$S_3(V,W,x_1)=0$ and $S_3(U,W,x_2)=0.$ To determine its components, 
let us consider its intersection with the hyperplane $U=W$. This
intersection is described by $S_3(U,V,x_1)=0$, $U=W$ and
$S_3(U,U,x_2)=0.$
From the third equation that is a degree-$4$ polynomial in $U$, we know that $U$ has
finitely many values. Thus we want to remove the extraneous points
lying in this hyperplane.

We look so at the components in the complement of this hyperplane and assume that
$U\neq W$. 
In this case, since
$S_3(U,V,x_1)-S_3(V,W,x_1)$ is divisible by $U-W$ we obtain a lower degree polynomial, namely:
$$
S_\delta=\frac{S_3(U,V,x_1)-S_3(V,W,x_1)}{U-W}.
$$
The variety defined thanks to the equations
$S_3(U,V,x_1)=0$, $S_\delta=0$ and $S_3(U,W,x_2)=0$ is now irreducible. We call it $\C$
and prove that it is a genus $1$ curve isomorphic to $\E$. To see that, let us give rational maps between
$\E$ and $\C$.

\paragraph{Mappings between $\E$ and $\C$.}
In the forward direction, let us consider the rational map:
\[
\begin{array}{cccl}
\Phi : & \E & \rightarrow & \C \\
 &Q & \mapsto & (x_{Q-P_1},x_Q,x_{Q+P_1})
\end{array}
\]
Every
point $P$ in $\E$ is such that $\Phi(P) \in \C$.

Besides, the images of the three points $\OO$, $P_1$ and $-P_1$ are
at infinity on $\C$ and that, by homogenization, we may check that
there are exactly three points at infinity on $\C$. 

As usual, $\Phi$ induces a map $\Phi^{*}$ from the function field
$\F_q(\C)$ to $\F_q(\E)$ (expressed with the two variables $X$ and
$Y$) using the following replacement:
\[
\begin{array}{cccl}
  \Phi^*: &  \F_q(\C)& \mapsto & \F_q(\E) \\
  &U& \mapsto &  \left(\frac{Y+y_1}{X-x_1}\right)^2-X-x_1,  \\
  & V  &\mapsto &   X, \\
  &  W &\mapsto &   \left(\frac{Y-y_1}{X-x_1}\right)^2-X-x_1.
\end{array}
\]
where $y_1$ is the ordinate of the point $P_1$ in $\E$. 
Developing and using the curve equation, the images of $U$ and $W$ can
be respectively simplified to:
\begin{eqnarray*}
  U & \mapsto &
  \frac{x_1X^2+(a+{x_1}^2)\,X+a\,x_1+2\,b+2\,y_1\,Y}{(X-x_1)^2}, \\
  W & \mapsto & 
  \frac{x_1X^2+(a+{x_1}^2)\,X+a\,x_1+2\,b-2\,y_1\,Y}{(X-x_1)^2}.
\end{eqnarray*}
In this form, it is clear that the map can be easily inverted when
$X\neq x_1$. Moreover, given a pair $(U,V)$ values we can compute
the value of $W$ and similarly, from $(V,W)$ we can compute $U$.

When $X=x_1$, we have two possibilities $\Phi(P_1)=(\infty,x_{P_1},x_{P_2})$ and
$\Phi(-P_1)=(x_{P_2},x_{P_1},\infty)$. These are distinct (unless $P_1$ has order $2$), which
means that $\Phi$ is a bijection and thus an isomorphism.

\section{Analysis of the invertibility of $N_D$}
\label{rightcurve}
\label{carefulconstruction}
In this appendix, we analyze the condition that appears in
Section~\ref{diagram} when explicitly writing down the definition of
the morphism $\Psi$. Indeed, as previously explained, we need $N_D$ to
be invertible in the group where we want to compute discrete
logarithms.

This analysis requires us to follow standard practice and first
decompose our target group~$\F_{q^{k}}^*$, in order to apply
Pohlig-Hellman algorithm~\cite{DBLP:journals/tit/PohligH78}.  Thanks
to this method it suffices to compute discrete logarithms in all prime
order subgroups of $\F_{q^{k}}^*$.

An important technicality is that we would need to
first factor $q^k-1$. Unfortunately, this would completely dominate
the cost of computation. However, to study the invertibility of $N_D,$
we do not need to factor $q^k-1$ fully, the existence of the
factorization suffices.

Let $\gamma$ be a prime factor dividing $q^k-1$, it suffices to check
that $N_D \neq 0 \mod \gamma$, for~$\Psi$ to be well-defined in
subgroups of order a power of $\gamma$ in $\F_{q^{k}}^*$.  Since $N_D$
is defined as the least common multiple of the cardinalities of $\E$
over each of the finite fields $\F_{q^d}$ with $1\leq d \leq D$, where
$D\leq k$ is the maximum degree of the places we want to consider,
this gives us an extra condition on $\E$.  Namely, it should
satisfy the following property:
\begin{itemize}
\item For any $i=1, \cdots, D$, $ |\E/\F_{q^{i}}| \neq 0 \mod \gamma$.
\end{itemize}

Let us study this condition.  We denote by $t$ the trace of $\E$ over
$\F_q$
and  factor the characteristic polynomial of the Frobenius of $\E$:
\[X^2-tX+q = (X-r)(X-s) \mod \gamma
\] with $r$ and $s$ in $\F_{\gamma^2}$.  We know that the number of
points of $\E/\F_{q^i}$ is equal to $(1-r^{i})(1-s^{i}) \mod
\gamma$. Thus, to ensure that the cardinalities of $\E$
over the field extensions $\F_{q^i}$, with $i$ in $[1,D]$, all differ from $0$
modulo $\gamma$ we just need to verify that both $r^{i} \neq 1 \mod \gamma$ and
$s^{i} \neq 1 \mod \gamma$. In order to do that, let us first
study the order of the product $rs=q \mod \gamma$.

By definition of $\gamma$, we have $q^k=1\mod \gamma  $. Furthermore,
the order of $q$ is strictly smaller than $k$ modulo $\gamma$ if and
only if $\gamma$ already divides the order of the multiplicative group
of a subfield of $\F_{q^k }$. In that case, we compute this part of
the logarithm by applying our method to the smallest such subfield.

We now assume that the order of $q$ is precisely $k \mod \gamma$. Thus, for any $i$ not a multiple of $k$, at most one of
$r^i$ or $s^i$ can be equal to $1 \mod \gamma$. Exchanging $r$ and
$s$ if necessary, we now study the case $r^i=1 \mod \gamma.$
In that case, we have $s^i=q^i\neq 1\mod \gamma.$ This implies
that $\E(\F_{q^i})$ contains a $\gamma$-torsion point $Q_\gamma$ but
not the full $\gamma$-torsion $\E[\gamma]$. Thus, the Tate pairing
provides a non-degenerate bilinear map to the $\gamma$ roots of unity:
\[
  e_i: <Q_\gamma> \times~\E(\F_{q^i})/\gamma \E(\F_{q^i}) \mapsto \F_{q}^*/ \left(\F_{q}^*\right)^{(q^k-1)/\gamma}.
\]
Fixing an arbitrary non-zero element from $\E(\F_{q^i})/\gamma
\E(\F_{q^i})$, we obtain a linear map $\tilde{\Psi_i}$ from the subgroup
generated by $Q_\gamma$ to the $\gamma$-th roots of unity.

Possibly after renormalization, $\tilde{\Psi_i}$ gives a compatible
replacement for $\Psi$ that can be applied to the $\gamma$-torsion
point. The renormalization consists in replacing $\tilde{\Psi_i}$ by
$\tilde{\Psi_i}^{\beta_i}$, where $\beta_i$ is the renormalization
constant. As a consequence, it is mathematically possible to extend
$\Psi$ to all divisors. One computational caveat is that determining
the value of $\beta_i$ can be expressed as a discrete logarihm problem
in the group of order $\gamma$. It does not affect the efficiency of
the overall algorithm but prevents independent check on relations
containing divisors of degree $i$ not compatible with the definition of
$\Psi$ during the precomputation phase.

To see how $\beta_i$ can be determined, let us take a place $p_i$ of
degree $i$ and compute $\tilde{\Psi_i}(p_i)$ (as usual this it the
product of the value for all the conjugate points in $p_i$). Then apply
one step of the descent algorithm to relate $p_i$ to places of degrees
$\neq i$ which are all compatible with the computation of
$\Psi$. Multiplying these contributions gives the renormalized value
$\tilde{\Psi_i}(p_i)^{\beta_i}.$ Thus, if we wish to do so, we can
compute $\beta_i$ from the individual logarithms of these
two values.

\section{Relations for factor base extension}
\label{bunches}
Let us describe our decomposition in groups to extend the factor base to all elementary divisors
of height equal or lower than $4$. The idea is to write a partition of $q$ groups
with $q^2$ elements in each and to be able to decrease the height of the divisor associated to
the bracket on the right again.
To illustrate the process, we define a first group with the monomials:
\[
\begin{array}{rcl}
g_1&=&UV\\
 g_2&=&U+V\\
 g_3&=&1.
\end{array}
\]Defining then
$G$ as all the linear combinations of these three monomials with coefficients
in $\F_q$ permits to set our first (special) group as:
\[\mathcal{G}=\{\Xi(\Psi^*(g)) \,  | \, g \in G \}.
\] All the divisors in the special group have height lower than
$4$. We now sieve on pairs of polynomials $(A,B)$ such that
$A = g_1 + \alpha g_2$ and $B= g_1+\beta g_3$ where
$\alpha, \beta \in \F_q$. On the left side it is clear that all
polynomials raised in the product belong to $G$. So all the divisors
in the corresponding sum on the left side have height lower than $4$
(see Table~\ref{table}) and belong to $\mathcal{G}$. On the right
side, we are left with a bracket $\bracket A B$ leading to a height
lower than~$8$. Again, the probability that it splits into divisors
with a height lower than $3$ is too low. Yet, computing the brackets:
\[
\begin{array}{rcl}
\bracket {g_1} {g_2} &=&VW(U+V)-(V+W)UV=V^2(W-U) \\
\bracket {g_1} {g_3 } &=& V(W-U)\\
\bracket {g_2 }{g_3}  &=& W-U
\end{array}
\]
and thanks to bilinearity we obtain that $W-U$ is a common factor of $\bracket A B$.
Besides we note that $h(\Xi(\Phi^*(W-U)))=4$.

Removing this constant contribution, we are left with a residual
height of $4$ on the right side. If it decomposes into lower height
divisors, this gives us a linear equation involving the logarithms a subset of the
divisors in $\mathcal{G}$. With enough such equations, we again use
linear algebra to compute the logarithms of the elements of
$\mathcal{G}$.

Note that  the probability to find a good relation is $3/4$ when $q$
grows. We thus expect $3q^2/4$ equations in $q^2/4$.

Note that the pairs $(A,B)$ that fail to give an equation are
nonetheless useful! Indeed, a pair~$(A,B)$ of sieving polynomials
fails if $\bracket A B$  leads to a divisor with height
precisely~$4$. It means that after obtain the logarithm of elements of
$\mathcal{G}$  we can derive the logarithm of these extra divisors for
free.

\paragraph{Construction of groups with one compelled point.}
Following the idea of the special group $\mathcal{G}$, we would like
to construct small other groups of divisors that have two
properties. First, for each group, we need to be able to create
relations involving only heigh-$4$ divisors from this group on the
left, possibly with divisors of lower height. Second, we need to 
control the splitting probability of the bracket on the right of the
equation. We proceed using compelled points.

How to choose our generators $g_1, g_2$ and $g_3$ in this case ? We
recall that the naive height-$4$ sieving is based on the monomials
$1, U, V, UV$. Since there is no reason to favor nor $U$ neither $V$,
we propose to preserve symmetry between the two variables, writing:
\[
\begin{array}{rcl}
g_1&=&UV+k_1  U\\
 g_2&=&UV+k_2 V\\
 g_3&=&1.
\end{array}
\] where $k_1$ and $k_2$ are in the base field $\F_q$. Defining again groups:
\[\mathcal{G}_{k_1,k_2} = \{\Xi(\Phi^*(g_1+ \alpha g_2 + \beta g_3)) \,  | \, \alpha,\beta \in \F_q \} \]
with $q^2$ divisors each, we sieve on pairs of polynomials $(A,B)$ such that $A = g_1 + \alpha g_2$ and
$B= g_1+\beta g_3$ where $\alpha, \beta \in \F_q$. 
On the left side all divisors have height
lower than $4$ and belong to $\mathcal{G}_{k_1,k_2}$. On the right side, we are left
with a bracket  $\bracket A B$ leading to a height lower than $8$. To decrease this height we 
consider the brackets:
\[
\begin{array}{rcl}
\bracket {g_1} {g_2} &=& k_1 \bracket {U} {UV}  + k_2\bracket {UV} {V}  +k_1 k_2 \bracket {U} {V} \\
\bracket {g_1} {g_3 } &=& VW+k_1V- UV+k_1U\\
\bracket {g_2 }{g_3}  &=& VW+k_2W- UV+k_2V
\end{array}
\] Note that $\bracket A B$ is a linear combination of these brackets
and that the last two ones are associated to divisors of height lower
than $4$. Thus, removing a point in
$\Xi (\Phi^*(\bracket {g_1} {g_2}))$ will suffice. Let us look at
$k_1 \bracket {U} {UV} + k_2\bracket {UV} {V} +k_1 k_2 \bracket {U}
{V} $ in details. Calling $c_f$ the coefficient in $\F_q$ of the
leading monomial\footnote{Considering the weighted degree in $X$ and
  $Y$ of each monomial.}  of $\bracket {g_1} {g_2}$ for the
denominator of any fraction~$f$, we see that we can force
$k_1 c_{\bracket {U} {UV}} + k_2 c_{\bracket {UV} {V}} + k_1 k_2
c_{\bracket {U} {V}} = 0$ in $\F_q$. We underline that for any fixed
constant
$k_1 \neq - c_{\bracket {UV} {V}} c_{\bracket {U} {V}}^{-1} $ there
exists a unique $k_2$ such that the previous equality is verified. It
means that we create $q-1$ such groups.  Besides, this annihilates the
leading monomial so decreases the weighted degree of
$\bracket {g_1} {g_2}$ and leads to remove a point in the associated
divisor of $\bracket {A} {B}$.  Hence, we are left with a residual
height of $7$.  We want the corresponding divisor to be written as a
sum of divisors of height $3$ at most. The heuristic probability to
get a good relation is so equal to:
\[1- (1/7+1/6+1/5+1/4) \approx 0.2405
\] as $q$ tends to infinity.

This is slightly too low for the purpose. As a consequence, we need
either to improve the group construction or to make good use of the
equations with a single height-$4$ divisor in the
bracket. Nevertheless, since $0.24$ is close to $1/4$, it is
conceivable that we can find enough relations in practice. We decided
to test it and we computed all the discrete logarithms up to
height-$5$ for the target finite field $\F_{3^{1345}}=\F_{243^{269}}$
with this method.  

\begin{remark} It is useful to know that:
\[
\begin{array}{rcl}
\bracket {U} {UV} &=& UV(V-W) \\
\bracket {UV} {V } &=& VW(V-W)\\
\bracket {U}{V}  &=& V^2-UW
\end{array}
\]
\end{remark}

\paragraph{Interaction with the action of Frobenius.}
Looking at our groupings, we see that we have built a total of $q$
different ones (including the special group $\mathcal{G}$. Since each
grouping contains about $q^2/4$, the computations (if successful)
gives us about $q^3/4$ logarithms of height $4$. This is much less
than the total expected number which is close to $q^4/4$. However,
the action of the Frobenius potentially multiply these logarithms by a
factor of $k$. For practical, we heuristically assume that this is the
case. The fact that we were able to compute the logarithms of all height-$4$
divisors for $\F_{3^{1345}}$ supports this assumption.

\paragraph{Going to height $5$.}
We continue the extension to height $5$ in similar fashion. Since,
this requires more degree of freedom, we no longer need to use
compelled points. Instead, we sieve on more general polynomials of the
forms $A=UV+a_U U+a_1$ and $B=UV+b_V V+b_1$.
On the left-hand side, all factors of the form $A-\alpha B$ have
height $4$. Thus they decompose in divisors of height at most $4$ and
their logarithms can be directly obtained. On the right-hand side, the
bracket has height at most $8$. We expect that it contains an
elementary divisor of height $5$ with probability close to $1/5$. As a
consequence, we obtain about $q^4/5$ divisors of height $5$ without
performing any linear algebra.

Again, thanks to the action of Frobenius, we expect to recover an
overwhelming fraction of divisors of height $5$. This turn out to work
in practice for our example $\F_{3^{1345}}$.

%%%%%%%%%%%%%%%%%%%%%%%%%%%%%%%%%%%%%%%%%%%%%%%%%%%%%
%%%%%%%%%%%%%%%%%%%%%%%%%%%%%%%%%%%%%%%%%%%%%%%%%%%%%

\end{document}